\documentclass[twocolumn]{autart}

\usepackage{cite}
\usepackage{amsmath,amssymb}
\usepackage{graphicx}
\usepackage{booktabs}
\usepackage{array}
\usepackage{algorithm}
\usepackage{algpseudocode}
\usepackage{flushend}
\usepackage{hyperref}
\hypersetup{hidelinks}
\allowdisplaybreaks
\newcolumntype{L}[1]{>{\raggedright\arraybackslash}p{#1}}

\begin{document}

\begin{frontmatter}

\title{Probe Sets for Nonlinear Kalman Filtering: Multi-Center Gains and Covariance Recalibration\thanksref{footnoteinfo}}

\thanks[footnoteinfo]{This paper was not presented at any IFAC meeting.
Corresponding author Shida Jiang.}

\author[berkeley]{Shida Jiang}\ead{shida\_jiang@berkeley.edu},
\author[chalmers]{Shengyu Tao}\ead{shengyu.tao@chalmers.se},
\author[berkeley]{Scott Moura}\ead{smoura@berkeley.edu}

\address[berkeley]{Department of Civil and Environmental Engineering,
University of California, Berkeley, CA, USA}
\address[chalmers]{Department of Electrical Engineering, Chalmers University of Technology, Gothenburg, Sweden}
\begin{keyword}
Kalman filtering; nonlinear observer and filter design; extended Kalman filter;
unscented Kalman filter; estimation theory; parameter and state estimation
\end{keyword}

\begin{abstract}
Kalman filters (KFs) choose the gain as the product of the state--measurement
cross-covariance and the inverse of the innovation covariance. In nonlinear
systems, the predictive distribution generally changes shape, and these
covariances require approximation. Conventional nonlinear KFs
approximate them around a single center, the predicted state. They can produce
inaccurate gains and overconfident covariance estimates when one local
approximation does not represent the measurement geometry across the
uncertainty region. To address this issue, we introduce probe sets, small covariance-scaled
collections of states at which the filter repeats its local approximation.
We then select the gain that minimizes the average covariance reported by these local approximations.
The same construction is used in our previously proposed covariance
recalibration step to improve covariance consistency. The construction applies to different KF variants without requiring a specific
approximation rule. For quadratic measurements, we derive conditions under
which probing yields lower normalized true error variance than the
corresponding single-center gain when prediction uncertainty is sufficiently
understated. We compare four KF variants with multiple baselines across 300 randomized
setups for each of two systems. The results show that probing
preserves typical accuracy while substantially reducing covariance
inconsistency and the rare large-error trajectories that dominate the
root mean square of per-run errors. The code is available at \href{https://github.com/Shida-Jiang/Probe_KF}{https://github.com/Shida-Jiang/Probe\_KF}.

\end{abstract}

\end{frontmatter}

\section{Introduction}\label{sec:introduction}

Kalman filters (KFs) select the gain as a function of two moments: the
state--measurement cross-covariance and the innovation covariance. For linear
systems with correctly specified moments, both can be computed exactly, and
the KF minimizes mean-square error among all linear estimators \cite{book3}.

Many attempts have been made to extend the KF to handle nonlinear
measurements. For example, the extended KF (EKF) linearizes the measurement
function at the predicted state. In nonlinear systems, however, the
predictive distribution generally changes shape across time steps, and the
required moments must be approximated. In nonlinear KFs, these
approximate moments determine both the Kalman gain and the reported error
covariance. The gain is optimal for this approximate model \cite{book3},
which need not represent the true predictive distribution. Approximation
error can therefore impair the correction while overstating its accuracy.
Indeed, center-dependent linearization is a documented source of EKF
inconsistency in nonlinear localization \cite{slam2,slam3,Huang2010}.
To capture more of the measurement curvature, other KF variants use more
detailed moment approximations. The second-order EKF (EKF2) takes a
second-order approximation of the measurement function and computes its
moments under a Gaussian assumption \cite{book1,book2}. Meanwhile, the
unscented and cubature KFs (UKF and CKF) use weighted point sets to
approximate the moments under an assumed Gaussian distribution
\cite{UKF1,UKF2,CKF}. However, these variants refine the moment approximation without addressing its underlying limitation: the approximation still relies on a single local approximation centered at the predicted mean. When the measurement's nonlinearity is strong, the resulting moments are inaccurate, degrading the state update and yielding a covariance that understates the actual error.

Our recent work addressed understated covariance estimates through
\emph{covariance recalibration} \cite{parent}. After the state update, it
reevaluates the moments at the updated state using the predicted covariance.
If the resulting report indicates increased uncertainty, a \emph{back-out}
rule discards the update and returns to the predicted state and covariance.
This step revises the covariance report after a correction but retains the
gain selected at the predicted state, which may still be inaccurate.

This paper introduces \emph{probe sets} to improve gain selection. A probe
set contains nearby states whose spread follows the covariance carried by
the filter. The filter evaluates its usual moment rule at each member.
It then selects the gain that minimizes the average reported covariance
across the local models. This collective assessment is also adopted in
recalibration. The probe does not rely on a specific moment approximation
rule, so it can be combined with different KF variants.

The proposed probe geometry is related to reduced sigma-point constructions
\cite{Julier03}, but the purposes are different. Divided-difference and spherical-simplex methods use
displaced points to build one derivative or quadrature approximation
\cite{norgaard2000new,Julier03}. The objective is still to approximate the
moments at the center. In the proposed framework, each probe member serves
as the center of a complete local moment approximation. Higher-order Taylor,
sigma-point, and cubature rules use additional computation to refine the
moments at one center \cite{QKF,SQKF,CKF5}, while probing uses it to assess
variation in the local approximations across the uncertainty region. As we show later in the paper, this structural difference yields a concrete advantage when covariance estimates are overconfident. Such overconfidence arises routinely in practice, from strong nonlinearity or systematic modeling error \cite{case1,case2}.

The main contributions are as follows.
\begin{itemize}
\item We challenge the conventional reliance on single-center moment
approximation at the predicted mean and introduce covariance-scaled probe
sets to improve the estimation accuracy and covariance consistency of
nonlinear Kalman filters.
\item For quadratic measurements under an exact Gaussian prediction, we show
that probing provides a conservative covariance report and cannot increase
the true error covariance under the stated assumptions. Single-center
updates do not share this guarantee in general. We also establish conditions
under which probing yields lower normalized true error variance than the
center gain of the same moment rule when prediction uncertainty is understated.
\item The framework does not require a specific moment approximation rule
and can be combined with different KF variants. Simulations across two
systems, four KF variants, and 300 randomized setups per system show that
it preserves typical accuracy while improving covariance consistency and
reducing large-error tails. In the low-noise battery cases, the reductions
in the root mean square of per-run errors and overconfidence reach several orders
of magnitude.
\end{itemize}

The rest of the paper is organized as follows. Section~\ref{sec:framework} develops the framework and its theoretical properties,
Section~\ref{sec:results} reports the experimental validation, and
Section~\ref{sec:conclusion} concludes. The appendices give the details of the implemented
KF variants and simulation models.

\section{Probe-set framework}\label{sec:framework}
\subsection{Problem formulation and the single-center limitation}\label{sec:formulation}

Consider the discrete-time system
\begin{equation}\label{eq:system}
 x_k=f_k(x_{k-1},u_{k-1})+w_{k-1},\qquad
 z_k=h_k(x_k)+v_k,
\end{equation}
where $x_k\in\mathbb R^n$ is the state, $u_{k-1}$ is the input, and
$z_k\in\mathbb R^p$ is the measurement. The process and measurement noises are
zero mean and mutually independent, with covariances $Q_{k-1}\succeq0$ and
$R_k\succeq0$. We write $A\succeq B$ when $A-B$ is positive semidefinite, and
$A\succ B$ when it is positive definite. Measurement noise is independent of the predictive state.
We analyze one measurement update conditionally on the past
measurements and suppress the time index. All moments below are assumed
finite. Let $\pi$ be the predictive distribution of
$x$. The filter carries a predicted state $\hat x^-$, a predicted covariance
$P\succ0$, and a predicted measurement $\hat z$.
These quantities are computed before the current measurement.
The carried covariance $P$ need not equal the true predictive covariance.

For a gain $K\in\mathbb R^{n\times p}$ selected without using the current
measurement, the candidate update is
\begin{equation}\label{eq:update}
 \hat x^+=\hat x^-+K\tilde z,\qquad \tilde z:=z-\hat z.
\end{equation}
Define the exact conditional moments
\begin{align}
 P^{\rm t}&:=\operatorname{Var}_\pi(x),&
 C^{\rm t}&:=\operatorname{Cov}_\pi(x,h(x)),\label{eq:true_pair_a}\\
 S^{\rm t}&:=\operatorname{Var}_\pi(h(x))+R.
 \label{eq:true_pair_b}
\end{align}
The central covariance of the candidate estimation error, averaged over the
current state and measurement conditional on past measurements, is
\begin{equation}\label{eq:true_covariance}
 P_+^{\rm t}(K)=P^{\rm t}-KC^{{\rm t}\top}-C^{\rm t}K^\top
 +KS^{\rm t}K^\top.
\end{equation}
This central covariance equals the conditional mean-squared-error matrix
when the update error has zero conditional mean.

A nonlinear KF cannot generally recover $(C^{\rm t},S^{\rm t})$ from the
predicted mean and covariance alone. Instead, it applies a local \emph{moment map}. At an evaluation
center $\xi$ and with spread $P$, the filter returns
\begin{equation}\label{eq:moment_map}
 \Theta_P(\xi):=\big(C_P(\xi),S_P(\xi)\big),\qquad S_P(\xi)\succeq0.
\end{equation}
Here $C_P(\xi)$ approximates the state--measurement cross-covariance, and
$S_P(\xi)$ is the corresponding innovation covariance, including $R$.
Covariance matrices used in inverses are assumed positive definite.
The EKF, EKF2, UKF, and CKF supply different such maps. Appendix
\ref{app:filters} gives their formulas.
A valid moment map must satisfy
\begin{equation}\label{eq:admissible}
 \begin{bmatrix}P&C_P(\xi)\\C_P(\xi)^\top&S_P(\xi)-R\end{bmatrix}\succeq0.
\end{equation}
In light of \eqref{eq:true_covariance}, for a generic moment pair $\theta=(C,S)$, define the generalized Joseph report
\cite{josephUKF,parent}
\begin{equation}\label{eq:generalized_joseph}
 P_{\rm rc}(\theta;K):=P-KC^\top-CK^\top+KSK^\top.
\end{equation}
Condition \eqref{eq:admissible} gives
\begin{equation}\label{eq:joseph_psd}
\begin{aligned}
 P_{\rm rc}(\theta;K)={}&
 \begin{bmatrix}I&-K\end{bmatrix}
 \begin{bmatrix}P&C\\C^\top&S-R\end{bmatrix}
 \begin{bmatrix}I\\-K^\top\end{bmatrix}\\
 &+KRK^\top\succeq0.
\end{aligned}
\end{equation}
The conventional center pair and gain are
\begin{equation}\label{eq:center_pair_gain}
 \theta_0:=\Theta_P(\hat x^-)=(C_0,S_0),\qquad
 K_0:=C_0S_0^{-1}.
\end{equation}
The next identity isolates the covariance penalty caused by an inaccurate
gain and expresses it in terms of moment error \cite{theoretical}.

\begin{prop}[covariance cost of gain error]\label{prop:oracle_regret}
Assume $S^{\rm t}\succ0$ and let
$K^{\rm t}:=C^{\rm t}(S^{\rm t})^{-1}$. For every gain $K$,
\begin{equation}\label{eq:oracle_completion}
 P_+^{\rm t}(K)-P_+^{\rm t}(K^{\rm t})
 =(K-K^{\rm t})S^{\rm t}(K-K^{\rm t})^\top\succeq0.
\end{equation}
Suppose an approximate pair $(\widehat C,\widehat S)$ with
$\widehat S\succ0$ produces
$\widehat K=\widehat C\widehat S^{-1}$. Set
$E_C:=\widehat C-C^{\rm t}$ and $E_S:=\widehat S-S^{\rm t}$. Then
\begin{equation}\label{eq:moment_regret_identity}
 P_+^{\rm t}(\widehat K)-P_+^{\rm t}(K^{\rm t})
 =\Gamma_{\widehat K}(S^{\rm t})^{-1}\Gamma_{\widehat K}^\top,
\end{equation}
where
\begin{equation}\label{eq:effective_moment_error}
 \Gamma_{\widehat K}:=E_C-\widehat K E_S.
\end{equation}
\end{prop}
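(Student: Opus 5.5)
The plan is a completion-of-squares argument followed by a substitution that rewrites the gain error in terms of the moment errors. Throughout, only the symmetry of $S^{\rm t}$ and its invertibility ($S^{\rm t}\succ0$) are needed.

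\emph{Step 1: the first identity.} Start from \eqref{eq:true_covariance} and use $C^{\rm t}=K^{\rm t}S^{\rm t}$, which is the definition of $K^{\rm t}$. Since $S^{\rm t}$ is symmetric, this gives
\begin{equation*}
P_+^{\rm t}(K)=P^{\rm t}-KS^{\rm t}K^{{\rm t}\top}-K^{\rm t}S^{\rm t}K^\top+KS^{\rm t}K^\top .
\end{equation*}
Expanding $(K-K^{\rm t})S^{\rm t}(K-K^{\rm t})^\top$ reproduces every $K$-dependent term above, plus the extra term $K^{\rm t}S^{\rm t}K^{{\rm t}\top}$. Therefore
\begin{equation*}
P_+^{\rm t}(K)=P^{\rm t}-K^{\rm t}S^{\rm t}K^{{\rm t}\top}+(K-K^{\rm t})S^{\rm t}(K-K^{\rm t})^\top .
\end{equation*}
Setting $K=K^{\rm t}$ identifies the constant $P^{\rm t}-K^{\rm t}S^{\rm t}K^{{\rm t}\top}$ as $P_+^{\rm t}(K^{\rm t})$, and \eqref{eq:oracle_completion} follows. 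Positive semidefiniteness holds because the difference is a congruence $MS^{\rm t}M^\top$ with $S^{\rm t}\succ0$.

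\emph{Step 2: express the gain error through the moment errors.} This is the only step that needs care. The aim is to show $\widehat K-K^{\rm t}=\Gamma_{\widehat K}(S^{\rm t})^{-1}$. Right-multiply $\widehat K-K^{\rm t}$ by $S^{\rm t}$. Then use $S^{\rm t}=\widehat S-E_S$ together with $\widehat K\widehat S=\widehat C$ and $K^{\rm t}S^{\rm t}=C^{\rm t}$:
\begin{equation*}
(\widehat K-K^{\rm t})S^{\rm t}=\widehat K\widehat S-\widehat K E_S-C^{\rm t}=\widehat C-C^{\rm t}-\widehat KE_S=E_C-\widehat KE_S=\Gamma_{\widehat K}.
\end{equation*}
The choice to factor through $S^{\rm t}$, rather than through $\widehat S$, is what makes $\widehat K$ (and not $K^{\rm t}$) multiply $E_S$ in \eqref{eq:effective_moment_error}.

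\emph{Step 3: conclude.} Apply \eqref{eq:oracle_completion} with $K=\widehat K$ and substitute $\widehat K-K^{\rm t}=\Gamma_{\widehat K}(S^{\rm t})^{-1}$. By symmetry of $(S^{\rm t})^{-1}$,
\begin{equation*}
P_+^{\rm t}(\widehat K)-P_+^{\rm t}(K^{\rm t})=\Gamma_{\widehat K}(S^{\rm t})^{-1}S^{\rm t}(S^{\rm t})^{-1}\Gamma_{\widehat K}^\top=\Gamma_{\widehat K}(S^{\rm t})^{-1}\Gamma_{\widehat K}^\top ,
\end{equation*}
which is \eqref{eq:moment_regret_identity}. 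The invertibility of $\widehat S$ is used only so that $\widehat K$ is well defined.
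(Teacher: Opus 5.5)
Your proposal is correct and follows essentially the same route as the paper. Both complete the square in the true error covariance using $K^{\rm t}S^{\rm t}=C^{\rm t}$, then show $(\widehat K-K^{\rm t})S^{\rm t}=E_C-\widehat K E_S=\Gamma_{\widehat K}$ and substitute this into the first identity.
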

\begin{pf}
Since $K^{\rm t}S^{\rm t}=C^{\rm t}$, completing the square in
\eqref{eq:true_covariance} proves \eqref{eq:oracle_completion}. Moreover,
\begin{equation*}
 (\widehat K-K^{\rm t})S^{\rm t}
 =\widehat K S^{\rm t}-C^{\rm t}
 =E_C-\widehat K E_S=\Gamma_{\widehat K}.
\end{equation*}
Substitution into \eqref{eq:oracle_completion} proves
\eqref{eq:moment_regret_identity}. \hfill$\square$
\end{pf}
The covariance penalty is therefore quadratic in the effective error of the
moment pair.

The conventional covariance report is
\begin{equation}\label{eq:conventional_report}
 P_{\rm conv}=P_{\rm rc}(\theta_0;K_0)
 =P-C_0S_0^{-1}C_0^\top.
\end{equation}
Thus one local moment pair selects the correction and also determines the
reported covariance at its estimated minimum. If that pair is inaccurate,
$K_0$ need not minimize \eqref{eq:true_covariance}, and $P_{\rm conv}$ can
assign excessive confidence to the correction. Figure
\ref{fig:probe_illustration} illustrates this single-center failure mode.

\begin{figure}[htbp]
\centering
\includegraphics[width=0.7\columnwidth]{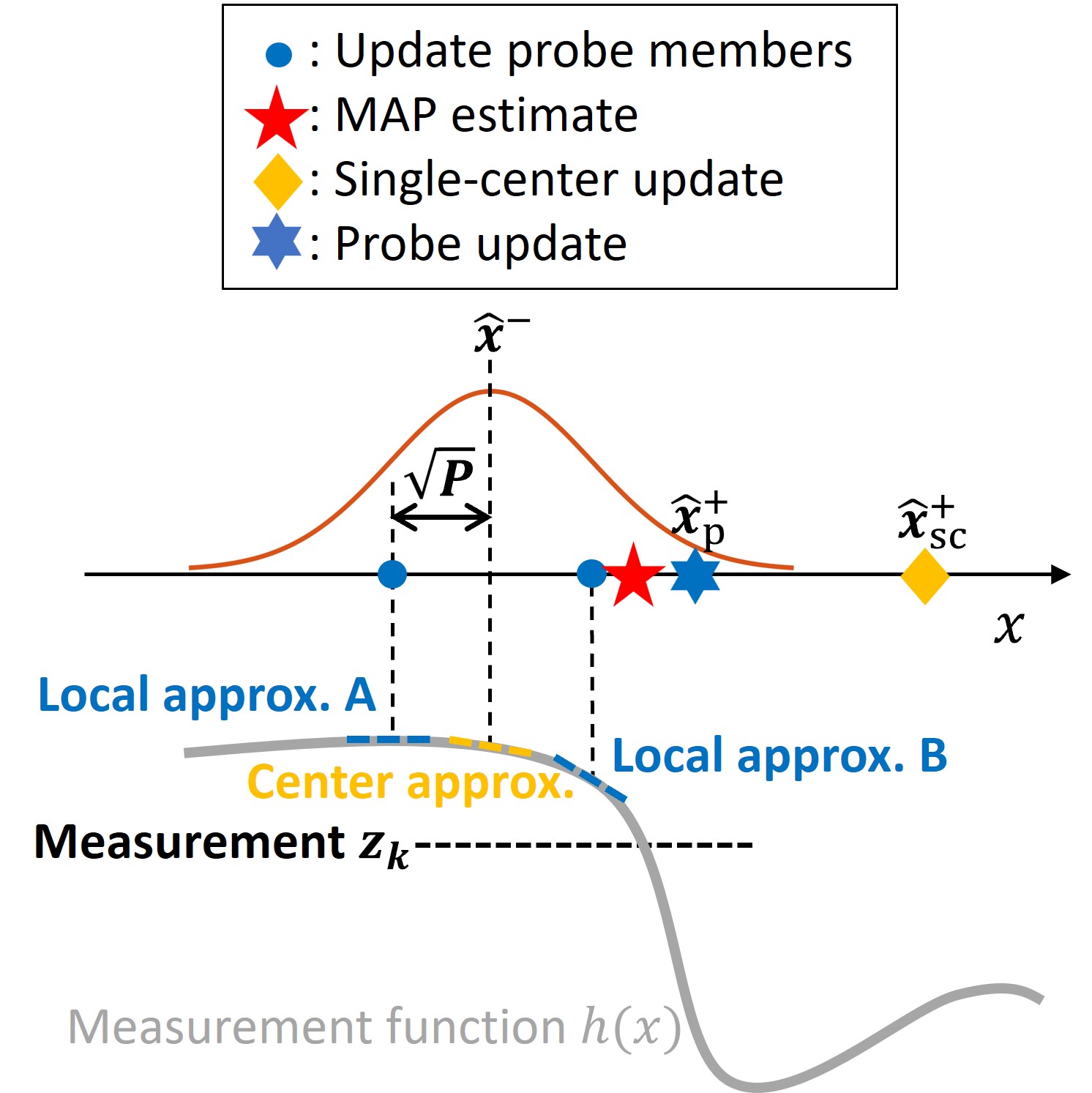}
\caption{Illustration of the update probe in a one-dimensional EKF update.
The predictive density has mean $\hat x^-$ and variance $P$, so the probe
members are $\hat x^-\!\pm\!\sqrt P$. Strong variation of $h(x)$ across this
region makes the center approximation move to $\hat x_{\rm sc}^+$, away from
the maximum a posteriori (MAP) estimate. Averaging the two probe moment pairs
gives $\hat x_{\rm p}^+$, which is much closer in this example.}
\label{fig:probe_illustration}
\end{figure}

\subsection{Update probe and probe-mean gain}\label{sec:update_probe}

The failure depicted in Fig.~\ref{fig:probe_illustration} is caused by
trusting one local approximation at the predicted state as if it were
accurate across the whole uncertainty region. A natural improvement is
therefore to select a Kalman gain that performs well on average across local
approximations evaluated throughout that region. This motivates the update
probe.

The update probe represents a collection of possible states around the
predicted mean. To make the point set representative and computationally
efficient, we use the smallest equally weighted set with mean $\hat x^-$
and covariance $P$. This requires $n+1$ points because $P$ has full rank.
For the one-dimensional example in Fig.~\ref{fig:probe_illustration},
the two members are $\hat x^-\pm\sqrt P$.
For $n$ dimensions, let $P=LL^\top$ be the Cholesky decomposition. Choose $n+1$
regular-simplex vertices $s_i\in\mathbb R^n$ satisfying
\begin{equation}\label{eq:simplex_moments}
 \frac{1}{n+1}\sum_{i=1}^{n+1}s_i=0,
\end{equation}
\begin{equation}\label{eq:simplex_covariance}
 \frac{1}{n+1}\sum_{i=1}^{n+1}s_is_i^\top=I_n.
\end{equation}
For $n=2$, they form an equilateral triangle. In higher dimensions, they are the vertices of a regular simplex, each with norm $\sqrt n$. The update-probe members are
\begin{equation}\label{eq:probe_points}
 \chi_i=\hat x^-+Ls_i.
\end{equation}
Their equally weighted displacement covariance is exactly $P$, so the probe spread is
fixed by the covariance carried by the filter and introduces no free scaling
parameter. 

After constructing the probe set, the filter repeats the moment
approximation at each member:
\begin{equation}\label{eq:update_probe_pairs}
 (C_i,S_i):=\Theta_P(\chi_i).
\end{equation}
We select a gain that minimizes the average reported error covariance
\begin{equation}\label{eq:average_report}
 \overline P_{\rm rc}(K)
 :=\frac{1}{n+1}\sum_iP_{\rm rc}((C_i,S_i);K).
\end{equation}
Proposition~\ref{prop:consensus} shows that the minimizing gain is
\begin{equation}\label{eq:probe_mean_gain}
 K_{\rm pm}:=\overline C\,\overline S^{-1},
\end{equation}
with
\begin{equation}\label{eq:average_pair}
 \overline C:=\frac{1}{n+1}\sum_i C_i,\qquad
 \overline S:=\frac{1}{n+1}\sum_i S_i.
\end{equation}

\begin{prop}[probe-mean optimality]\label{prop:consensus}
For every $K$,
\begin{equation}\label{eq:consensus_completion}
 \overline P_{\rm rc}(K)-\overline P_{\rm rc}(K_{\rm pm})
 =(K-K_{\rm pm})\overline S(K-K_{\rm pm})^\top\succeq0.
\end{equation}
In particular,
\begin{equation}\label{eq:center_vs_probe}
 \overline P_{\rm rc}(K_0)-\overline P_{\rm rc}(K_{\rm pm})
 =(K_0-K_{\rm pm})\overline S(K_0-K_{\rm pm})^\top\succeq0.
\end{equation}
\end{prop}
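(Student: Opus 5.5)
The plan is to average the generalized Joseph report and then complete the square, exactly as in the first part of Proposition~\ref{prop:oracle_regret}. By \eqref{eq:generalized_joseph}, $P_{\rm rc}(\theta;K)$ is affine in the pair $\theta=(C,S)$ for fixed $K$. Averaging over the probe members in \eqref{eq:average_report} therefore gives
\begin{equation*}
 \overline P_{\rm rc}(K)=P-K\overline C^\top-\overline C K^\top+K\overline S K^\top
 =P_{\rm rc}\big((\overline C,\overline S);K\big),
\end{equation*}
with $\overline C,\overline S$ as in \eqref{eq:average_pair}. The average report is thus the Joseph report of the single averaged pair, and it is a quadratic matrix function of $K$.

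Next I will make sure the gain $K_{\rm pm}$ in \eqref{eq:probe_mean_gain} is well defined. Each $S_i\succeq0$, so $\overline S\succeq0$. Its inverse appears in \eqref{eq:probe_mean_gain}, and the paper's standing convention says covariance matrices used in inverses are positive definite. I therefore take $\overline S\succ0$. This is also automatic when $R\succ0$, since admissibility \eqref{eq:admissible} gives $S_i\succeq R$.

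With this in hand, the key step is to complete the square. The definition of $K_{\rm pm}$ gives $K_{\rm pm}\overline S=\overline C$. Substituting $\overline C=K_{\rm pm}\overline S$ into the averaged report yields
\begin{equation*}
 \overline P_{\rm rc}(K)=P-K_{\rm pm}\overline S K_{\rm pm}^\top+(K-K_{\rm pm})\overline S(K-K_{\rm pm})^\top .
\end{equation*}
To verify this, expand the last term: the cross terms $-K\overline S K_{\rm pm}^\top-K_{\rm pm}\overline S K^\top$ equal $-K\overline C^\top-\overline C K^\top$, using the symmetry of $\overline S$.

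Evaluating at $K=K_{\rm pm}$ identifies $\overline P_{\rm rc}(K_{\rm pm})=P-\overline C\,\overline S^{-1}\overline C^\top$. Subtracting this from the display gives \eqref{eq:consensus_completion}. Positive semidefiniteness follows because $M\overline S M^\top\succeq0$ for any $M$ when $\overline S\succeq0$. Finally, \eqref{eq:center_vs_probe} is the special case $K=K_0$ from \eqref{eq:center_pair_gain}.

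I expect no real obstacle here. The only points needing care are that the probe average commutes with the Joseph report, which relies on affinity in $(C,S)$, and the invertibility of $\overline S$. Everything else is the same completion of squares used for \eqref{eq:oracle_completion}.
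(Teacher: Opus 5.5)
Your proposal is correct and follows essentially the same route as the paper: average the generalized Joseph report by affineness, complete the square using $K_{\rm pm}\overline S=\overline C$, and specialize to $K=K_0$. Your added checks on the invertibility and symmetry of $\overline S$ go slightly beyond the paper but do not change the argument.
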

\begin{pf}
Affineness of \eqref{eq:generalized_joseph} gives
\begin{equation*}
 \overline P_{\rm rc}(K)
 =P-K\overline C^\top-\overline C K^\top+K\overline S K^\top.
\end{equation*}
Completing the square with $K_{\rm pm}\overline S=\overline C$ proves
\eqref{eq:consensus_completion}. Setting $K=K_0$ proves
\eqref{eq:center_vs_probe}. \hfill$\square$
\end{pf}
Thus $K_{\rm pm}$ is the unique common gain that minimizes the average
reported covariance in every state direction. Equation \eqref{eq:center_vs_probe} is the excess average local covariance
incurred by the center gain. We apply the probe gain to the same innovation
as the underlying filter, using its predicted measurement $\hat z$
at $\hat x^-$.

Proposition~\ref{prop:consensus} establishes optimality for the average
local covariance report. To understand how probing accounts for
differences among the local models, we separate each local innovation
covariance into a contribution from the local slope and a residual.
Define
\begin{equation}\label{eq:implied_slope}
 A_i:=C_i^\top P^{-1},\qquad
 \Omega_i:=S_i-R-A_iPA_i^\top\succeq0,
\end{equation}
where $A_i$ is the slope encoded by the cross-covariance and $\Omega_i$
contains the measurement uncertainty left after removing the linear
contribution and the noise covariance $R$. Let $\overline A$ and
$\overline\Omega$ denote their arithmetic means, and define the
slope-disagreement covariance
\begin{equation}\label{eq:slope_disagreement}
 \Sigma_A:=\frac{1}{n+1}\sum_i
 (A_i-\overline A)P(A_i-\overline A)^\top\succeq0.
\end{equation}
Since $C_i=PA_i^\top$ and
$S_i=A_iPA_i^\top+R+\Omega_i$, averaging gives
\begin{equation}\label{eq:average_slope_form}
 \overline C=P\overline A^\top,\qquad
 \overline S=\overline A P\overline A^\top
 +R+\overline\Omega+\Sigma_A.
\end{equation}
Thus, averaging the moment pairs incorporates variation among the local
slopes through the additional innovation-covariance term $\Sigma_A$,
which vanishes when the slopes agree. We next establish conditions under
which the resulting probe report is conservative and the probe gain
prevents the true error covariance from increasing.

Quadratic measurements provide a setting in which this effect can be
calculated explicitly. Their slopes vary linearly with the state,
while their curvature is constant. We first take the predicted covariance
as exact to isolate the effect of the moment approximation.
Let $\mu$ be the predictive mean, set $e=x-\mu$, and suppose
\begin{equation}\label{eq:quadratic_measurement}
 h_a(\mu+e)=h_a(\mu)+g_a^\top e+\tfrac12e^\top H_ae,
 \quad a=1,\ldots,p,
\end{equation}
where $g_a=\nabla h_a(\mu)$ and $H_a=\nabla^2h_a(\mu)$ is symmetric.
Let $G\in\mathbb R^{p\times n}$ have rows $g_a^\top$.
The linear term contributes $B:=GPG^\top$ to the measurement covariance.
To quantify the contribution of curvature, define
\begin{equation}\label{eq:quadratic_residual}
 (\Omega_q)_{ab}:=\tfrac12\operatorname{tr}(H_aPH_bP).
\end{equation}
Under the prediction model in Proposition~\ref{prop:gaussian_quadratic},
$\Omega_q$ is the covariance contributed by the quadratic terms. The following proposition establishes covariance guarantees without requiring the local moment approximation to be exact.

\begin{samepage}
\begin{prop}[quadratic covariance bounds]\label{prop:gaussian_quadratic}
Assume $x-\mu\sim\mathcal N(0,P)$ and $\hat x^-=\mu$, so that
$P=P^{\rm t}$. Suppose that, for every displacement $d$, the local
moment map has the form
\begin{equation}\label{eq:quadratic_map_form}
\begin{aligned}
 C_P(\mu+d)&=PA(d)^\top,\\
 S_P(\mu+d)&=A(d)PA(d)^\top+R+\Omega_{\rm map},
\end{aligned}
\end{equation}
where the $a$th row of $A(d)$ is $g_a^\top+d^\top H_a$ and
$\Omega_{\rm map}\succeq0$ is independent of $d$. Then, for every fixed gain $K$,
\begin{equation}\label{eq:quadratic_conservative_report}
 P_{\rm rc}((\overline C,\overline S);K)-P_+^{\rm t}(K)
 =K(\Omega_q+\Omega_{\rm map})K^\top\succeq0.
\end{equation}
At the probe-mean gain,
\begin{equation}\label{eq:quadratic_true_nonincrease}
\begin{aligned}
 P_+^{\rm t}(K_{\rm pm})
 &\preceq P_{\rm rc}((\overline C,\overline S);K_{\rm pm})\\
 &=P-C^{\rm t}\overline S^{-1}C^{{\rm t}\top}
 \preceq P.
\end{aligned}
\end{equation}
\end{prop}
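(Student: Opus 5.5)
We compute the exact Gaussian moments for the quadratic measurement, then compute the probe averages using the simplex identities \eqref{eq:simplex_moments}--\eqref{eq:simplex_covariance}, and compare. The identity \eqref{eq:quadratic_conservative_report} is the core step; the chain \eqref{eq:quadratic_true_nonincrease} then follows by plugging in $K_{\rm pm}$ and completing the square as in Proposition~\ref{prop:consensus}.

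\emph{True moments.} With $e\sim\mathcal N(0,P)$, the third moments of $e$ vanish. Hence
\[
\operatorname{Cov}(e,h_a)=\operatorname{Cov}\big(e,g_a^\top e+\tfrac12e^\top H_ae\big)=Pg_a,
\]
so $C^{\rm t}=PG^\top$. For the variance, Isserlis' theorem gives $\operatorname{Cov}(\tfrac12e^\top H_ae,\tfrac12e^\top H_be)=\tfrac12\operatorname{tr}(H_aPH_bP)$. The cross terms between the linear and quadratic parts vanish, again by zero third moments. Therefore
\[
S^{\rm t}=GPG^\top+\Omega_q+R .
\]
Substituting into \eqref{eq:true_covariance} yields
\[
P_+^{\rm t}(K)=P-KGP-PG^\top K^\top+K(GPG^\top+\Omega_q+R)K^\top .
\]

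\emph{Probe averages.} The displacement of probe member $i$ is $d_i=Ls_i$, and $A(d_i)=G+D_i$, where the $a$th row of $D_i$ is $d_i^\top H_a$. Since $D_i$ is linear in $d_i$ and $\sum_i d_i=0$, we get $\overline A=G$, so $\overline C=PG^\top=C^{\rm t}$. Then, by \eqref{eq:average_slope_form} with $\Omega_i=\Omega_{\rm map}$,
\[
\overline S=GPG^\top+R+\Omega_{\rm map}+\Sigma_A,\qquad (\Sigma_A)_{ab}=\frac1{n+1}\sum_i d_i^\top H_aPH_bd_i .
\]
Writing each summand as a trace and using $\frac1{n+1}\sum_id_id_i^\top=LL^\top=P$ from \eqref{eq:simplex_covariance}, we obtain
\[
(\Sigma_A)_{ab}=\operatorname{tr}(H_aPH_bP)=2(\Omega_q)_{ab}.
\]

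\emph{Comparison.} Here I expect an obstacle, because the factor does not match. The probe innovation covariance contains $2\Omega_q+\Omega_{\rm map}$, whereas the true one contains $\Omega_q$. Subtracting gives
\[
P_{\rm rc}((\overline C,\overline S);K)-P_+^{\rm t}(K)=K(\Omega_q+\Omega_{\rm map})K^\top ,
\]
which is exactly \eqref{eq:quadratic_conservative_report}. So the factor of two is precisely what the stated form requires, and this step needs careful bookkeeping rather than a new idea. I would double-check the Isserlis constant $\tfrac12\operatorname{tr}(\cdot)$ against the definition \eqref{eq:quadratic_residual} at this point. The difference is $\succeq0$ because $\Omega_q\succeq0$, being a Gram matrix of the $H_a$ under the inner product $\operatorname{tr}(XPYP)$, and because $\Omega_{\rm map}\succeq0$.

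\emph{Gain statement.} Take $K=K_{\rm pm}=C^{\rm t}\overline S^{-1}$. Then \eqref{eq:quadratic_conservative_report} gives the first inequality in \eqref{eq:quadratic_true_nonincrease}. Expanding the generalized Joseph report with $\overline C=C^{\rm t}$ and $K_{\rm pm}\overline S=C^{\rm t}$ collapses it to $P-C^{\rm t}\overline S^{-1}C^{{\rm t}\top}$. This is $\preceq P$ because $\overline S\succ0$.
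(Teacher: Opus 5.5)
Your proposal is correct and follows essentially the same route as the paper. Vanishing Gaussian third moments and Isserlis' formula give $C^{\rm t}=PG^\top$ and $S^{\rm t}=B+R+\Omega_q$. The simplex identities give $\overline A=G$ and $\Sigma_A=2\Omega_q$, hence $\overline C=C^{\rm t}$ and $\overline S-S^{\rm t}=\Omega_q+\Omega_{\rm map}$, and the Joseph report then collapses at $K_{\rm pm}=C^{\rm t}\overline S^{-1}$.

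The ``obstacle'' you anticipate is not one, since the factor of two is exactly what yields $\Omega_q+\Omega_{\rm map}$. Your Gram-matrix argument for $\Omega_q\succeq0$ makes explicit a point the paper leaves implicit.
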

\end{samepage}

\begin{pf}
The centered Gaussian error has zero third moments, giving
$C^{\rm t}=PG^\top$. Isserlis' formula shows that the quadratic
terms have covariance $\Omega_q$, as defined in
\eqref{eq:quadratic_residual}, and hence
$S^{\rm t}=B+R+\Omega_q$.

For $d_i=Ls_i$, the simplex mean yields $\overline A=G$.
Moreover,
\begin{equation*}
\begin{aligned}
 (\Sigma_A)_{ab}
 &=\frac{1}{n+1}\sum_i d_i^\top H_aPH_bd_i\\
 &=\operatorname{tr}\!\left(
 H_aPH_b\frac{1}{n+1}\sum_i d_id_i^\top\right)\\
 &=\operatorname{tr}(H_aPH_bP)
 =2(\Omega_q)_{ab}.
\end{aligned}
\end{equation*}
Substituting $\overline A=G$ and $\Sigma_A=2\Omega_q$ into
\eqref{eq:average_slope_form}, and evaluating
\eqref{eq:quadratic_map_form} at $d=0$, gives
\begin{equation}\label{eq:quadratic_center_and_probe}
\begin{aligned}
 C_0&=\overline C=C^{\rm t},\\
 S_0&=S^{\rm t}+\Omega_{\rm map}-\Omega_q,\\
 \overline S&=S^{\rm t}+\Omega_{\rm map}+\Omega_q
 \succeq S^{\rm t}.
\end{aligned}
\end{equation}
Subtracting \eqref{eq:true_covariance} from
\eqref{eq:generalized_joseph} therefore proves
\eqref{eq:quadratic_conservative_report}.

At $K_{\rm pm}=C^{\rm t}\overline S^{-1}$, the generalized
Joseph report reduces to
$P-C^{\rm t}\overline S^{-1}C^{{\rm t}\top}$, which proves
\eqref{eq:quadratic_true_nonincrease}. \hfill$\square$
\end{pf}
Proposition~\ref{prop:gaussian_quadratic} bounds the true innovation
covariance and the true error covariance of a fixed-gain correction by their
probe-based reports for every moment map of the stated form. The center
innovation covariance meets the same bound precisely when
$\Omega_{\rm map}\succeq\Omega_q$, since
$S_0-S^{\rm t}=\Omega_{\rm map}-\Omega_q$.
Among the four examined implementations, only EKF2 satisfies this condition
throughout the stated class, with $\Omega_{\rm map}=\Omega_q$.
The EKF omits this residual. The UKF and CKF residuals depend on their point
geometry and need not dominate $\Omega_q$.

The same distinction matters for the correction. The probe gain satisfies
\eqref{eq:quadratic_true_nonincrease}, whereas the center gain gives
\begin{equation}\label{eq:center_true_change}
\begin{aligned}
 P_+^{\rm t}(K_0)-P
 &=K_0(S^{\rm t}-2S_0)K_0^\top\\
 &=K_0(\Omega_q-2\Omega_{\rm map}-B-R)K_0^\top,
\end{aligned}
\end{equation}
which can be positive along some state directions. Thus, under the
assumptions of Proposition~\ref{prop:gaussian_quadratic}, probing gives a
conservative covariance report and prevents an increase in the true error
covariance. The single-center update does not generally provide both
guarantees.

Proposition~\ref{prop:gaussian_quadratic} assumes an exact predicted
covariance. In recursive filtering, the carried covariance can understate
prediction uncertainty even when a moment calculation is exact for its
inputs. We next examine when the additional curvature term introduced by
probing improves the gain in this situation. We compare the total normalized
true error variance $\operatorname{tr}(P^{-1}P_+^{\rm t}(K))$.
It sums the error variances after normalizing the state by the carried
covariance $P$ and is invariant under invertible linear changes of state
coordinates. We first set $R=0$ to isolate the effect of measurement curvature.

\begin{samepage}
\begin{prop}[benefit under overconfidence]
\label{prop:understated_scale}
Consider the quadratic measurement model \eqref{eq:quadratic_measurement}
and moment-map form \eqref{eq:quadratic_map_form}.
Let the true prediction error $x-\hat x^-$ be Gaussian with mean zero
and covariance $\upsilon P$, where $\upsilon>0$, while the filter uses
$\hat x^-=\mu$ and $P\succ0$. Assume $R=0$, $B:=GPG^\top\succ0$, and,
for some $\delta\ge0$,
\begin{equation}\label{eq:scale_residual_bound}
 0\preceq\Omega_{\rm map}\preceq\delta\Omega_q.
\end{equation}
The center gain and probe-mean gain are
\begin{equation}\label{eq:scale_center_probe_gains}
\begin{aligned}
 K_0
 &=PG^\top(B+\Omega_{\rm map})^{-1},\\
 K_{\rm pm}
 &=PG^\top(B+\Omega_{\rm map}+2\Omega_q)^{-1}.
\end{aligned}
\end{equation}
If $\upsilon\ge1+\delta$, their true error covariances satisfy
\begin{equation}\label{eq:scale_same_map}
 \operatorname{tr}\!\left(P^{-1}P_+^{\rm t}(K_{\rm pm})\right)
 \le
 \operatorname{tr}\!\left(P^{-1}P_+^{\rm t}(K_0)\right).
\end{equation}
The inequality is strict when $\Omega_q\ne0$.
\end{prop}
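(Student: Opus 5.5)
The plan is to reduce both sides of the inequality to explicit functions of the matrix used in the gain and then compare them. With $R=0$ and true prediction covariance $\upsilon P$, repeating the moment computation in the proof of Proposition~\ref{prop:gaussian_quadratic} gives $C^{\rm t}=\upsilon PG^\top$. The quadratic terms contribute $\operatorname{tr}(H_a\upsilon PH_b\upsilon P)/2$, so $S^{\rm t}=\upsilon B+\upsilon^2\Omega_q=\upsilon M_{\rm t}$ with $M_{\rm t}:=B+\upsilon\Omega_q$. The oracle gain is therefore $K^{\rm t}=PG^\top M_{\rm t}^{-1}$.

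Both gains in \eqref{eq:scale_center_probe_gains} have the form $K=PG^\top M^{-1}$, with $M_0=B+\Omega_{\rm map}$ and $M_{\rm pm}=M_0+2\Omega_q$. These forms follow from \eqref{eq:quadratic_map_form} evaluated at $d=0$ and from $\overline A=G$, $\Sigma_A=2\Omega_q$, exactly as in the proof of Proposition~\ref{prop:gaussian_quadratic}. By Proposition~\ref{prop:oracle_regret}, it suffices to compare the regrets $\operatorname{tr}\!\big(P^{-1}(K-K^{\rm t})S^{\rm t}(K-K^{\rm t})^\top\big)$. With $D:=M^{-1}-M_{\rm t}^{-1}$, the cyclic property reduces each regret to $\upsilon\operatorname{tr}(DM_{\rm t}DB)$. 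Applying the congruence $B^{-1/2}(\cdot)B^{-1/2}$ to all of $M_0,M_{\rm pm},M_{\rm t},\Omega_q,\Omega_{\rm map}$ normalizes to $B=I$. The target is then
\[
\operatorname{tr}(D_{\rm pm}M_{\rm t}D_{\rm pm})\le\operatorname{tr}(D_0M_{\rm t}D_0),
\]
where the quantity $\operatorname{tr}(DM_{\rm t}D)=\operatorname{tr}(M^{-1}M_{\rm t}M^{-1})-2\operatorname{tr}M^{-1}+\operatorname{tr}M_{\rm t}^{-1}$ is a convex quadratic in $X=M^{-1}$ minimized at $X=M_{\rm t}^{-1}$.

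The key structural facts come from the assumptions. First, $\Omega_{\rm map}\preceq\delta\Omega_q$ and $\upsilon\ge1+\delta$ give
\[
N:=M_0+\Omega_q\preceq B+(1+\delta)\Omega_q\preceq M_{\rm t},
\]
so the midpoint $N=(M_0+M_{\rm pm})/2$ lies below the oracle matrix. Operator convexity of the inverse then gives
\[
\tfrac12(M_0^{-1}+M_{\rm pm}^{-1})\succeq N^{-1}\succeq M_{\rm t}^{-1}.
\]
Equivalently, $G_+:=D_0+D_{\rm pm}\succeq0$. Second, $M_{\rm pm}\succeq M_0$ gives $F:=D_0-D_{\rm pm}=M_0^{-1}-M_{\rm pm}^{-1}\succeq0$. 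Expanding the difference of regrets yields $\operatorname{tr}(FM_{\rm t}G_+)$. In the scalar case this is the elementary argument: if $m_{\rm pm}\le m_{\rm t}$ the claim is immediate, and otherwise convexity of $1/m$ gives $2/m_{\rm t}\le 1/m_0+1/m_{\rm pm}$. Strictness would follow because $\Omega_q\ne0$ makes $F\ne0$.

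The main obstacle is the matrix step. A trace $\operatorname{tr}(FWG_+)$ of three positive semidefinite matrices need not be nonnegative, so positivity of $F$ and $G_+$ alone does not suffice. I would first choose coordinates that absorb $M_{\rm t}$, by whitening with $M_{\rm t}$ instead of $B$, and work with $U_i:=M_{\rm t}^{1/2}M_i^{-1}M_{\rm t}^{1/2}$. In these coordinates the regret is $\operatorname{tr}\big((U_i-I)\widetilde B(U_i-I)\big)$. I would then try to prove the Loewner-order statement $(U_{\rm pm}-I)^2\preceq(U_0-I)^2$, which would make the trace comparison valid against any weight $\widetilde B\succeq0$. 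If that fails, the fallback is to exploit the specific structure $M_{\rm pm}-M_0=2\Omega_q$ and $M_{\rm t}-N\succeq(\upsilon-1-\delta)\Omega_q\succeq0$. I would integrate the derivative of the regret along the path $M(s)=M_0+s\Omega_q$ for $s\in[0,2]$ and pair the contribution at $s$ with that at $2-s$ about the midpoint $N$. This would reproduce the scalar symmetry argument before taking traces.
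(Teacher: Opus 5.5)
Your reduction is correct as far as it goes. The forms $S^{\rm t}=\upsilon M_{\rm t}$ and regret $\upsilon\operatorname{tr}(DM_{\rm t}DB)$ check out, as do $F\succeq0$, $G_+\succeq0$ (via $N\preceq M_{\rm t}$ and operator convexity) and the identity that the regret difference equals $\operatorname{tr}(FM_{\rm t}G_+)$. But the proof stops at the only nontrivial step, and your primary plan for that step is false.

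The Loewner inequality $(U_{\rm pm}-I)^2\preceq(U_0-I)^2$ is equivalent to $D_{\rm pm}M_{\rm t}D_{\rm pm}\preceq D_0M_{\rm t}D_0$. This is preserved under simultaneous congruence of $M_0,M_{\rm pm},M_{\rm t}$, so it can be checked in coordinates with $S_0=B+\Omega_{\rm map}=I$. There $B=E$, $\Omega_q=T$, $M_{\rm t}=Y:=E+\upsilon T$ and $M_{\rm pm}=I+2T$. With $J:=(I+2T)^{-1}$, the inequality reads $Z:=Y-JYJ-2(I-J)\succeq0$.

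Now take $T=\operatorname{diag}(1,10^{-3})$ and $\Omega_{\rm map}=T^{1/2}uu^\top T^{1/2}$ with $u=(1,1)^\top/\sqrt2$, together with $\delta=1$, $\upsilon=2$ and $B=I-\Omega_{\rm map}\succ0$. Every hypothesis holds. Write $e_{ik}$ for the entries of $E$, $t_i$ for the eigenvalues of $T$, and $j_i=(1+2t_i)^{-1}$. Then $Z_{11}=8/9$, $Z_{22}\approx2.0\times10^{-6}$ and $Z_{12}=e_{12}(1-j_1j_2)\approx-1.06\times10^{-2}$, so $\det Z<0$.

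Hence no argument that works for an arbitrary weight $\widetilde B\succeq0$ can succeed. The proposition holds only because the weight is $B$ itself. The same example rules out any version of your fallback that proves a matrix inequality before pairing with $B$. Your strictness remark has the same defect: $F\neq0$ and $G_+\succeq0$ do not force $\operatorname{tr}(FM_{\rm t}G_+)>0$.

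The missing idea is to whiten by the center innovation matrix $S_0$ rather than by $B$ or $M_{\rm t}$. The two gains then become $PG^\top S_0^{-1/2}XS_0^{-1/2}$ with $X=I$ or $X=J$, both functions of $T$. The one non-commuting matrix, $E=S_0^{-1/2}BS_0^{-1/2}$, then appears both as the weight and inside $M_{\rm t}$. In an orthonormal eigenbasis of $T$, the normalized risk difference is
\[
\Delta=-\operatorname{tr}(ZE)=\sum_ie_{ii}\bigl[2(1-j_i)-\upsilon t_i(1-j_i^2)\bigr]-\sum_{i,k}e_{ik}^2(1-j_ij_k).
\]
The off-diagonal entries of $E$ enter as squares with the favorable sign and can be dropped, even when $Z$ itself is indefinite. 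The diagonal entries satisfy $0\le1-e_{ii}\le\delta t_i$, because they are diagonal entries of the positive semidefinite matrices $I-E$ and $\delta T-(I-E)$ in that basis. For $\upsilon\ge1+\delta$, the per-direction bound $2/(1+j_i)-e_{ii}-\upsilon t_i\le-t_i^2/(1+t_i)$ then gives
\[
\Delta\le-\sum_ie_{ii}(1-j_i^2)\frac{t_i^2}{1+t_i}.
\]
This yields both the inequality and its strictness when $\Omega_q\neq0$.
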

\end{samepage}
\begin{pf}
We compare the two gains directly after normalizing by the center
innovation covariance. Under the stated quadratic Gaussian model,
\begin{equation}\label{eq:scaled_true_pair}
 C^{\rm t}=\upsilon PG^\top,\qquad
 S^{\rm t}=\upsilon(B+\upsilon\Omega_q).
\end{equation}
Set $S_0:=B+\Omega_{\rm map}\succ0$ and define
\begin{equation}\label{eq:scale_center_normalization}
 E:=S_0^{-1/2}BS_0^{-1/2},\qquad
 T:=S_0^{-1/2}\Omega_qS_0^{-1/2},
\end{equation}
where $S_0^{-1/2}$ is the symmetric inverse square root of $S_0$.
The residual bound becomes
\begin{equation}\label{eq:scale_normalized_bounds}
 0\prec E\preceq I,\qquad
 0\preceq I-E\preceq\delta T.
\end{equation}
In these coordinates, the probe gain differs from the center gain
through the factor $J:=(I+2T)^{-1}$:
\begin{equation}\label{eq:scale_probe_factor}
\begin{aligned}
 K_0&=PG^\top S_0^{-1/2}I S_0^{-1/2},\\
 K_{\rm pm}&=PG^\top S_0^{-1/2}J S_0^{-1/2}.
\end{aligned}
\end{equation}
It remains to show that inserting $J$ does not increase the normalized
true error variance. Define its change, divided by $\upsilon>0$, as
\begin{equation}\label{eq:scale_risk_difference}
 \Delta:=\frac{1}{\upsilon}\operatorname{tr}\!\left[
 P^{-1}\bigl(P_+^{\rm t}(K_{\rm pm})-P_+^{\rm t}(K_0)\bigr)
 \right].
\end{equation}
To evaluate $\Delta$, first note that $ S_0^{-1/2}GPG^\top S_0^{-1/2}=E$, $S_0^{-1/2}S^{\rm t}S_0^{-1/2}
 =\upsilon(E+\upsilon T).$
Substituting $K_{\rm pm}$ into \eqref{eq:true_covariance}
and using cyclicity of the trace therefore gives
\begin{equation}\label{eq:scale_probe_risk}
\begin{aligned}
 \frac{1}{\upsilon}
 \operatorname{tr}\!\left[P^{-1}P_+^{\rm t}(K_{\rm pm})\right]
 ={}&n-2\operatorname{tr}(EJ)\\
 &+\operatorname{tr}\!\left[EJ(E+\upsilon T)J\right].
\end{aligned}
\end{equation}
For the center gain, the same calculation with $J=I$ yields
\begin{equation}\label{eq:scale_center_risk}
\begin{aligned}
 \frac{1}{\upsilon}
 \operatorname{tr}\!\left[P^{-1}P_+^{\rm t}(K_0)\right]
 ={}&n-2\operatorname{tr}(E)\\
 &+\operatorname{tr}\!\left[E(E+\upsilon T)\right].
\end{aligned}
\end{equation}
Subtracting \eqref{eq:scale_center_risk} from
\eqref{eq:scale_probe_risk} cancels the prior contribution $n$.
Moreover, since $J$ is a function of $T$, we have $JT=TJ$ and hence
\begin{equation}\label{eq:scale_quadratic_trace_expansion}
\begin{aligned}
 \operatorname{tr}\!\left[EJ(E+\upsilon T)J\right]
 &=\operatorname{tr}(EJEJ)
   +\upsilon\operatorname{tr}(EJTJ)\\
 &=\operatorname{tr}(EJEJ)
   +\upsilon\operatorname{tr}(ETJ^2).
\end{aligned}
\end{equation}
Combining these expressions gives
\begin{equation}\label{eq:scale_risk_difference_trace}
\begin{aligned}
 \Delta={}&2\operatorname{tr}[E(I-J)]
 +\operatorname{tr}(EJEJ-E^2)\\
 &+\upsilon\operatorname{tr}[ET(J^2-I)].
\end{aligned}
\end{equation}
Since $J=(I+2T)^{-1}$, an orthonormal eigenbasis of $T$ also
diagonalizes $J$. Choose an orthogonal matrix $U$ such that
\begin{equation}\label{eq:scale_diagonal_coordinates}
\begin{aligned}
 U^\top T U&=\operatorname{diag}(t_1,\ldots,t_p),\\
 U^\top J U&=\operatorname{diag}(j_1,\ldots,j_p),
\end{aligned}
\end{equation}
where $t_i\ge0$ and $j_i=(1+2t_i)^{-1}\in(0,1]$.
Write $U^\top E U=(e_{ij})$.
Taking diagonal entries in \eqref{eq:scale_normalized_bounds}
in these coordinates gives
\begin{equation}\label{eq:scale_diagonal_entry_bounds}
 e_{ii}>0,\qquad
 0\le1-e_{ii}\le\delta t_i.
\end{equation}

The trace is unchanged by this orthogonal change of coordinates.
Because $J$ and $T$ are diagonal in these coordinates,
\begin{equation}\label{eq:scale_linear_trace_sums}
\begin{aligned}
 \operatorname{tr}[E(I-J)]
 &=\sum_i e_{ii}(1-j_i),\\
 \operatorname{tr}[ET(J^2-I)]
 &=-\sum_i e_{ii}t_i(1-j_i^2).
\end{aligned}
\end{equation}
For the remaining trace, symmetry of $E$ gives $e_{ji}=e_{ij}$,
so
\begin{equation}\label{eq:scale_quadratic_trace_sums}
\begin{aligned}
 \operatorname{tr}(EJEJ)
 &=\sum_{i,j}e_{ij}j_j e_{ji}j_i
 =\sum_{i,j}e_{ij}^2j_i j_j,\\
 \operatorname{tr}(E^2)
 &=\sum_{i,j}e_{ij}e_{ji}
 =\sum_{i,j}e_{ij}^2.
\end{aligned}
\end{equation}
Substituting these identities into
\eqref{eq:scale_risk_difference_trace} yields
\begin{equation}\label{eq:scale_risk_difference_sum}
\begin{aligned}
 \Delta=\sum_i e_{ii}
 \bigl[2(1-j_i)-\upsilon t_i(1-j_i^2)\bigr]-\sum_{i,j}e_{ij}^2(1-j_i j_j).
\end{aligned}
\end{equation}
Since $0<j_i\le1$, the terms with $i\ne j$ enter with a
nonpositive sign. Omitting these terms and using
$2(1-j_i)=2(1-j_i^2)/(1+j_i)$ gives the upper bound
\begin{equation}\label{eq:scale_diagonal_bound}
 \Delta\le\sum_i e_{ii}(1-j_i^2)
 \left[\frac{2}{1+j_i}-e_{ii}-\upsilon t_i\right].
\end{equation}
Since $2/(1+j_i)=1+t_i/(1+t_i)$, the bounds on $e_{ii}$ and
$\upsilon\ge\delta+1$ imply
\begin{equation}\label{eq:scale_component_bound}
\begin{aligned}
 \frac{2}{1+j_i}-e_{ii}-\upsilon t_i
 &\le t_i\left[\frac{1}{1+t_i}+\delta-\upsilon\right]\\
 &\le-\frac{t_i^2}{1+t_i}.
\end{aligned}
\end{equation}
Consequently,
\begin{equation}\label{eq:scale_direct_comparison}
 \Delta\le-\sum_i e_{ii}(1-j_i^2)
 \frac{t_i^2}{1+t_i}\le0.
\end{equation}
If $\Omega_q\ne0$, then some $t_i>0$ and $j_i<1$.
Since $e_{ii}>0$, the final bound is strictly negative.
If $\Omega_q=0$, then $J=I$ and the two gains
coincide. This proves both claims. \hfill$\square$
\end{pf}

Proposition~\ref{prop:understated_scale} compares each local moment
map with its own probed version. The residual bound allows different
contributions across measurement directions. Once covariance understatement
reaches the stated threshold, the additional term $2\Omega_q$ improves
the gain in normalized true error variance.

For an exact local moment rule, $\Omega_{\rm map}=\Omega_q$.
Taking $\delta=1$ shows that probing this rule improves on its center
gain when $\upsilon\ge2$ and $\Omega_q\ne0$.
Thus, even exact local moments can benefit from probing when they are
computed using an understated prediction covariance.
The improvement comes from the slope variation across probe members,
which changes the gain beyond the moment calculation at a single center.

For fixed model matrices and $\Omega_q\ne0$, the strict advantage also
persists for sufficiently small $R\succeq0$ by continuity. Here, smallness can
be measured by $\|B^{-1/2}RB^{-1/2}\|_2$, where $\|\cdot\|_2$ is the
spectral norm.

Altogether, Propositions~\ref{prop:consensus}--\ref{prop:understated_scale}
explain how probe averaging affects the state update. It selects the best
common gain for the moment pairs across the probe. Under the assumptions of
Proposition~\ref{prop:gaussian_quadratic}, the added slope variation provides a conservative
covariance report and prevents an increase in the true error covariance.
It also improves the gain in normalized true error variance when prediction
uncertainty is sufficiently understated. To carry the resulting uncertainty
into the next prediction, the filter still needs a covariance report at the
updated estimate. This is the role of recalibration.

\subsection{Covariance recalibration and invariant back-out}\label{sec:recalibration}

The conventional report \eqref{eq:conventional_report} is minimized for
the same approximate pair that selected the gain. In
Fig.~\ref{fig:probe_illustration}, the conventional EKF correction moves
away from the MAP estimate, yet its reported covariance decreases.
Theorem~1 of \cite{parent} formalizes this tendency toward overconfidence:
under conditionally unbiased cross- and innovation-covariance approximations,
an exact predicted covariance makes the conventional minimum-form report no
larger than the candidate-update covariance on average.

Covariance recalibration \cite{parent} repeats the moment calculation at the
updated state using the predicted covariance. It holds the applied gain $K$
fixed and inserts the new pair into the generalized Joseph report.
Single-point recalibration is
\begin{equation}\label{eq:single_recal_pair}
 \check\theta_0:=\Theta_P(\hat x^+),\qquad
 P_{\rm one}:=P_{\rm rc}(\check\theta_0;K).
\end{equation}
In Fig.~\ref{fig:probe_illustration}, this means relinearizing the
measurement function at $\hat x_{\rm sc}^+$ to reassess the uncertainty
after the correction.

For the probe framework, recalibration repeats the aggregate moment-map
calculation at the updated state using the predicted covariance. The resulting
collection is called the
\emph{recalibration probe}. It retains the spread $P$ because the report
estimates how the applied gain transforms the predicted uncertainty. The same
factor $L$ and simplex vertices $s_i$ are used in both probe stages.
With $K=K_{\rm pm}$, define
\begin{equation}\label{eq:recalibration_points}
 \check\chi_i=\hat x^++Ls_i,\qquad
 \check\theta_i=(\check C_i,\check S_i):=\Theta_P(\check\chi_i),
\end{equation}
\begin{equation}\label{eq:recalibration_average}
 \check C_{\rm av}:=\frac{1}{n+1}\sum_i\check C_i,\qquad
 \check S_{\rm av}:=\frac{1}{n+1}\sum_i\check S_i,
\end{equation}
and
\begin{equation}\label{eq:probe_recal_report}
 P_{\rm rep}:=P_{\rm rc}((\check C_{\rm av},\check S_{\rm av});K_{\rm pm}).
\end{equation}
Because \eqref{eq:generalized_joseph} is affine in the moment pair,
\begin{equation}\label{eq:average_of_reports}
 P_{\rm rep}=\frac{1}{n+1}\sum_i
 P_{\rm rc}(\check\theta_i;K_{\rm pm}).
\end{equation}

As illustrated in Fig.~\ref{fig:probe_illustration}, a nonlinear KF update
does not necessarily improve the estimate. Recalibration gives the filter
a way to reassess the uncertainty after the correction. In the scalar case,
$P_{\rm rep}>P$ means that the revised local approximation assigns greater
uncertainty to the corrected estimate. A reasonable response is to withdraw
the update and return to the predicted mean and covariance. More generally,
for a positive definite weight $M$, define
$J_M(A):=\operatorname{tr}(MA)$ as a measure of uncertainty for a covariance
matrix $A$. The back-out step is triggered when
$J_M(P_{\rm rep})>J_M(P)$.

Our earlier work uses $M=I$ \cite{parent}, so the back-out criterion compares
the sums of the state variances before and after the update. This metric is
appropriate when the state components are directly comparable. When they
have different units, adding their variances has no clear interpretation.
We therefore use $M=P^{-1}$ to normalize the comparison by the predicted
uncertainty. In particular,
\begin{equation}\label{eq:normalized_trace}
 J_{P^{-1}}(A)=\operatorname{tr}(P^{-1}A),\qquad J_{P^{-1}}(P)=n.
\end{equation}
This uses the same normalization as Proposition~\ref{prop:understated_scale},
applied here to the reported covariance.
This metric is unchanged by an invertible coordinate transformation $x'=Ux$.
Indeed, with $A'=UAU^\top$ and $P'=UPU^\top$, cyclic invariance of the trace
gives
\begin{equation}\label{eq:norm_invariance}
 J_{(P')^{-1}}(A')=\operatorname{tr}((UPU^\top)^{-1}UAU^\top)
 =J_{P^{-1}}(A).
\end{equation}
After conditional back-out, the output state estimate and covariance are
\begin{equation}\label{eq:backout_output}
 (\hat x^{\rm out},P^{\rm out})=
 \begin{cases}
 (\hat x^+,P_{\rm rep}),&J_{P^{-1}}(P_{\rm rep})\le n,\\
 (\hat x^-,P),&J_{P^{-1}}(P_{\rm rep})>n.
 \end{cases}
\end{equation}
Both branches satisfy $J_{P^{-1}}(P^{\rm out})\le n$. A rejected update
returns both the state and covariance to their predicted values.

\subsection{Complete algorithm}\label{sec:algorithm}

Algorithm \ref{alg:main} summarizes the proposed framework. The underlying filter
computes the predicted measurement at $\hat x_{k|k-1}$. Every measurement
moment calculation uses the predicted covariance $P_{k|k-1}$ as its spread.
For a linear system and a moment rule exact on affine maps, the covariance
pair is independent of the evaluation center. Both probes then return the
conventional pair, recalibration leaves the report unchanged, and the
back-out rule accepts the update. The framework therefore reduces to the
linear KF in this special case.

\begin{algorithm}[htbp]
\caption{Probe-set framework for nonlinear Kalman filtering}
\label{alg:main}
\begin{algorithmic}[1]
\State Initialize $\hat x_{0|0}$ and $P_{0|0}$
\For{$k=1,2,\ldots$}
  \State Predict $\hat x_{k|k-1}$ and $P_{k|k-1}$
  \State Compute the predicted measurement $\hat z_{k|k-1}$
  \State Factor $P_{k|k-1}=L_kL_k^\top$ and form the simplex
  \State Evaluate $(C_i^-,S_i^-)$ at $\hat x_{k|k-1}+L_ks_i$
  \State $\overline C_k^-\gets(n+1)^{-1}\sum_iC_i^-$,
         $\overline S_k^-\gets(n+1)^{-1}\sum_iS_i^-$
  \State $K_k\gets\overline C_k^-(\overline S_k^-)^{-1}$
  \State Apply $\hat x_{k|k}\gets\hat x_{k|k-1}
         +K_k(z_k-\hat z_{k|k-1})$
  \State Evaluate $(C_i^+,S_i^+)$ at $\hat x_{k|k}+L_ks_i$
  \State $\overline C_k^+\gets(n+1)^{-1}\sum_iC_i^+$,
         $\overline S_k^+\gets(n+1)^{-1}\sum_iS_i^+$
  \State $P_{k|k}\gets P_{k|k-1}-K_k\overline C_k^{+\top}
         -\overline C_k^+K_k^\top+K_k\overline S_k^+K_k^\top$
  \If{$\operatorname{tr}(P_{k|k-1}^{-1}P_{k|k})>n$}
    \State Back out: revert to $\hat x_{k|k-1}$ and $P_{k|k-1}$
  \EndIf
\EndFor
\end{algorithmic}
\end{algorithm}

\section{Results}\label{sec:results}

\subsection{Simulation protocol}\label{sec:protocol}

The experiments test accuracy, covariance consistency, and computational
cost on two systems: battery state estimation and terrain-referenced navigation. The battery example estimates state of charge (SOC) and state of health
(SOH) from terminal voltage under a known current profile. The terrain example
estimates two-dimensional position from elevation measurements and a known
terrain map. Both simulations contain 60 updates.

To evaluate probing across KF variants, we consider EKF, EKF2, UKF, and CKF
in each application. Table \ref{tab:frameworks} summarizes six configurations
for comparison and ablation. F4 is the proposed framework
(Algorithm \ref{alg:main}). F0, F1, and the iterated posterior linearization
filter (IPLF) are three published baselines. F0 is the conventional
implementation. F1 uses single-point recalibration with the $M=I$ back-out
rule \cite{parent}, and IPLF iteratively refines its local model within each
update \cite{plf}. These comparisons test the probe framework's added benefit.
F2 and F3 omit the update and recalibration probes, respectively, while
retaining unit-invariant back-out. Comparisons among F2--F4 isolate the two
probe effects.

\begin{table}[htbp]
\centering
\caption{Principal methods used in the numerical study. Here $P$ is the
predicted covariance and $M$ is the trace weight. The proposed framework is F4.}
\label{tab:frameworks}
\scriptsize
\setlength{\tabcolsep}{2.4pt}
\renewcommand{\arraystretch}{1.16}
\setlength{\aboverulesep}{0pt}
\setlength{\belowrulesep}{0pt}
\vspace{3pt}
\begin{tabular}{l|L{1.55cm}|L{3cm}|L{1.55cm}}
\toprule
Label & Gain & Covariance report & Back-out \\
\midrule
F0 & Center & Center-pair minimum & -- \\
F1 & Center & Single recalibration & $M=I$ \\
F2 & Center & Probe average & $M=P^{-1}$ \\
F3 & Probe mean & Single recalibration & $M=P^{-1}$ \\
F4 & Probe mean & Probe average & $M=P^{-1}$ \\
IPLF & Iterated & Standard IPLF & -- \\
\bottomrule
\end{tabular}
\end{table}

The IPLF uses a Kullback--Leibler stopping tolerance $\gamma=10^{-4}$ and at
most 20 iterations per update \cite{plf}. The UKF uses $\alpha=10^{-3}$, $\beta=2$, and
$\kappa=0$ \cite{UKF2}. Appendix \ref{app:filters} gives the implemented moment
formulas, and Appendix \ref{app:systems} specifies both system models.

We use two types of experiments to compare filter performance. The
measurement-noise sweep illustrates sensitivity to measurement noise. It tests
F0--F4 and IPLF at nine noise levels with $10{,}000$ paired Monte Carlo runs
per level. All paired
methods receive identical initial errors and noise realizations. The noise standard deviation ranges from $10^{-6}$ to $10^{-2}$ V
for the battery and from $1$ to $10^3$ m for terrain-referenced navigation. 

The setup sweep experiment varies the physical conditions and initial uncertainty.
A setup fixes the model parameters, noise levels, and initial-error scales.
Runs within a setup redraw the initial error and noise sequences. We sample
300 setups per application and evaluate all four underlying filters. One
\emph{filter-setup cell} is a physical setup evaluated with one such filter,
giving 1,200 cells per application. Each cell uses 1,000 paired runs of the six
principal configurations and its corresponding higher-order baseline. The varied
setup parameters and associated ranges are given in Appendix \ref{app:systems}.
This study evaluates each framework--variant combination across a broad
range of conditions.

In the measurement-noise-sweep study, the system parameters (except the measurement noise covariance matrix) are rounded versions of battery setup 55
and terrain setup 190 in the setup-sweep study. We selected these cases from the sampled population because they clearly separate the performance of the filters.

\subsection{State-estimation accuracy}

\begin{figure*}[htbp]
\centering
\includegraphics[width=0.48\textwidth]{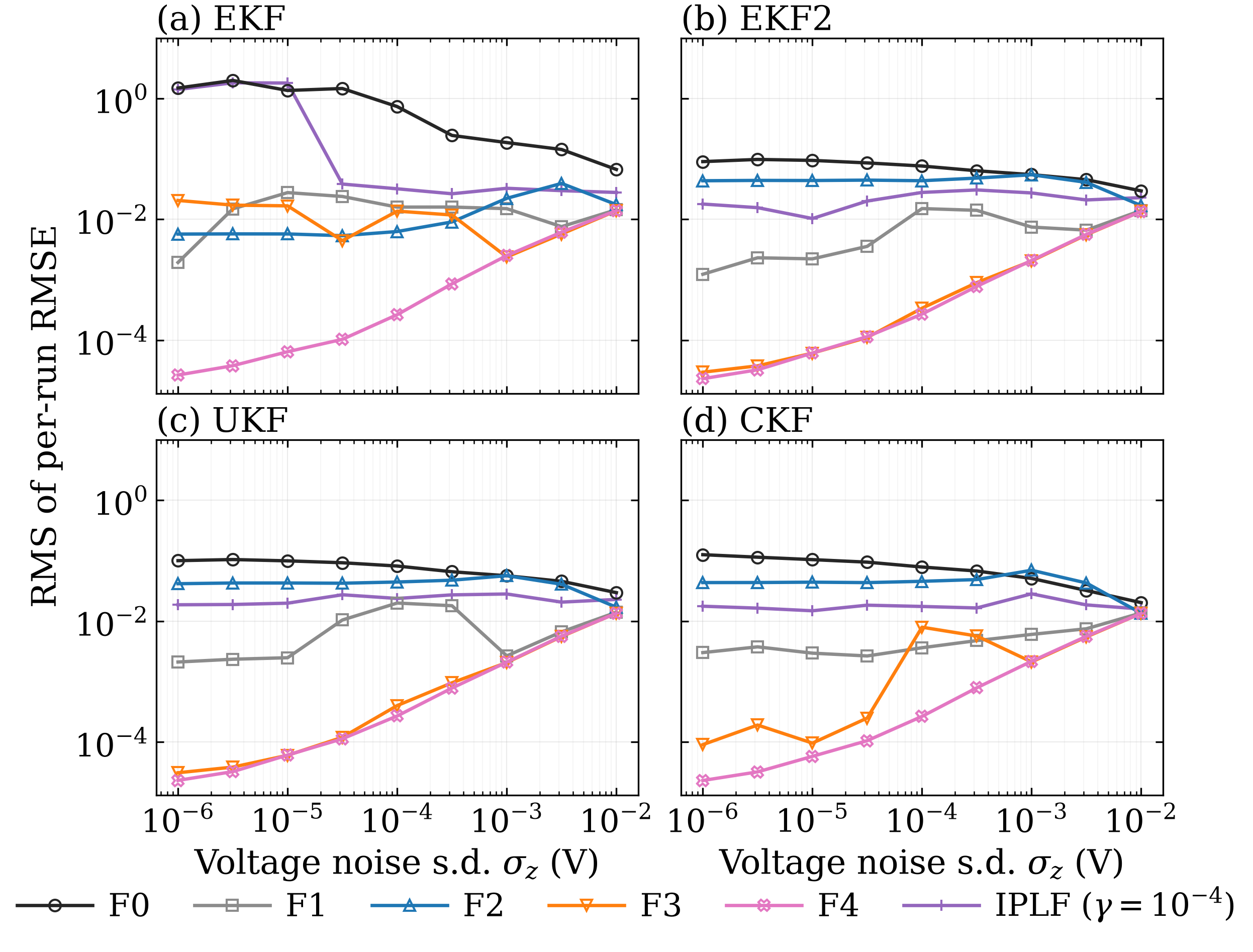}\hfill
\includegraphics[width=0.48\textwidth]{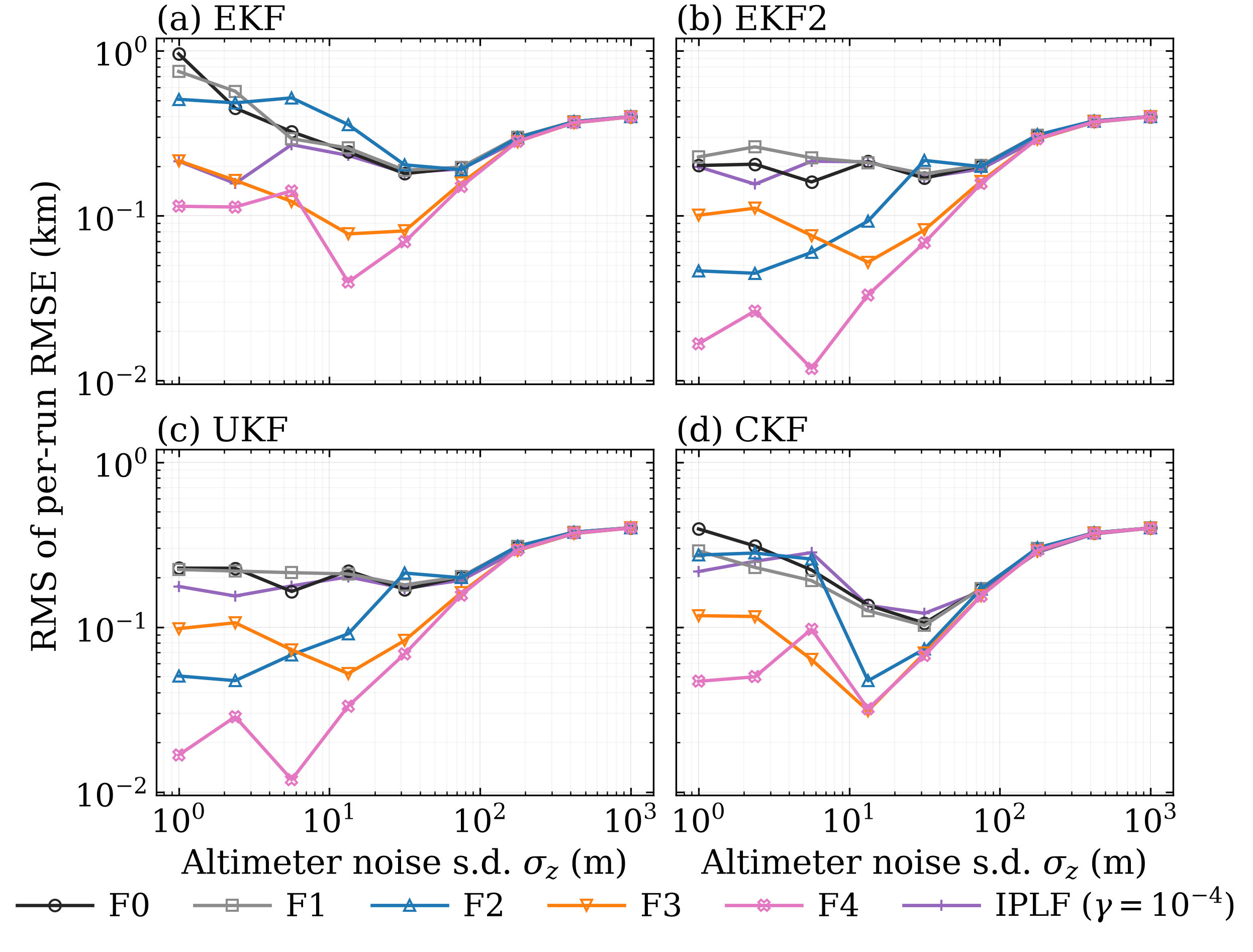}
\caption{RMS of per-run RMSE in the fixed measurement-noise sweeps.
Each per-run value averages the two key-state RMSEs over the final 25\%
of the trajectory. Left: battery state estimation. Right:
terrain-referenced navigation. Each point uses $10{,}000$ paired Monte
Carlo runs.}
\label{fig:rmse_sweeps}
\end{figure*}

\begin{figure*}[htbp]
\centering
\includegraphics[width=0.48\textwidth]{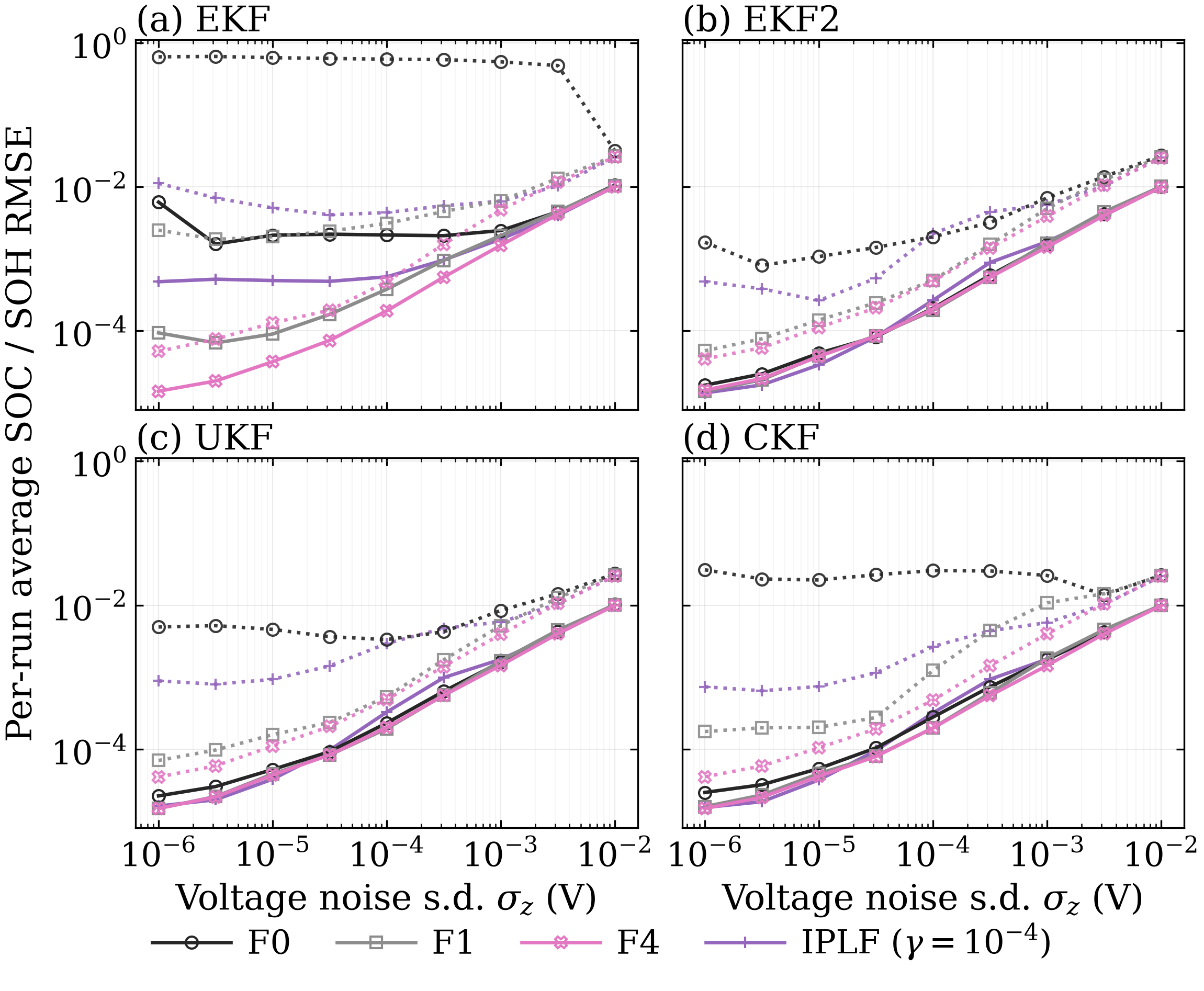}\hfill
\includegraphics[width=0.48\textwidth]{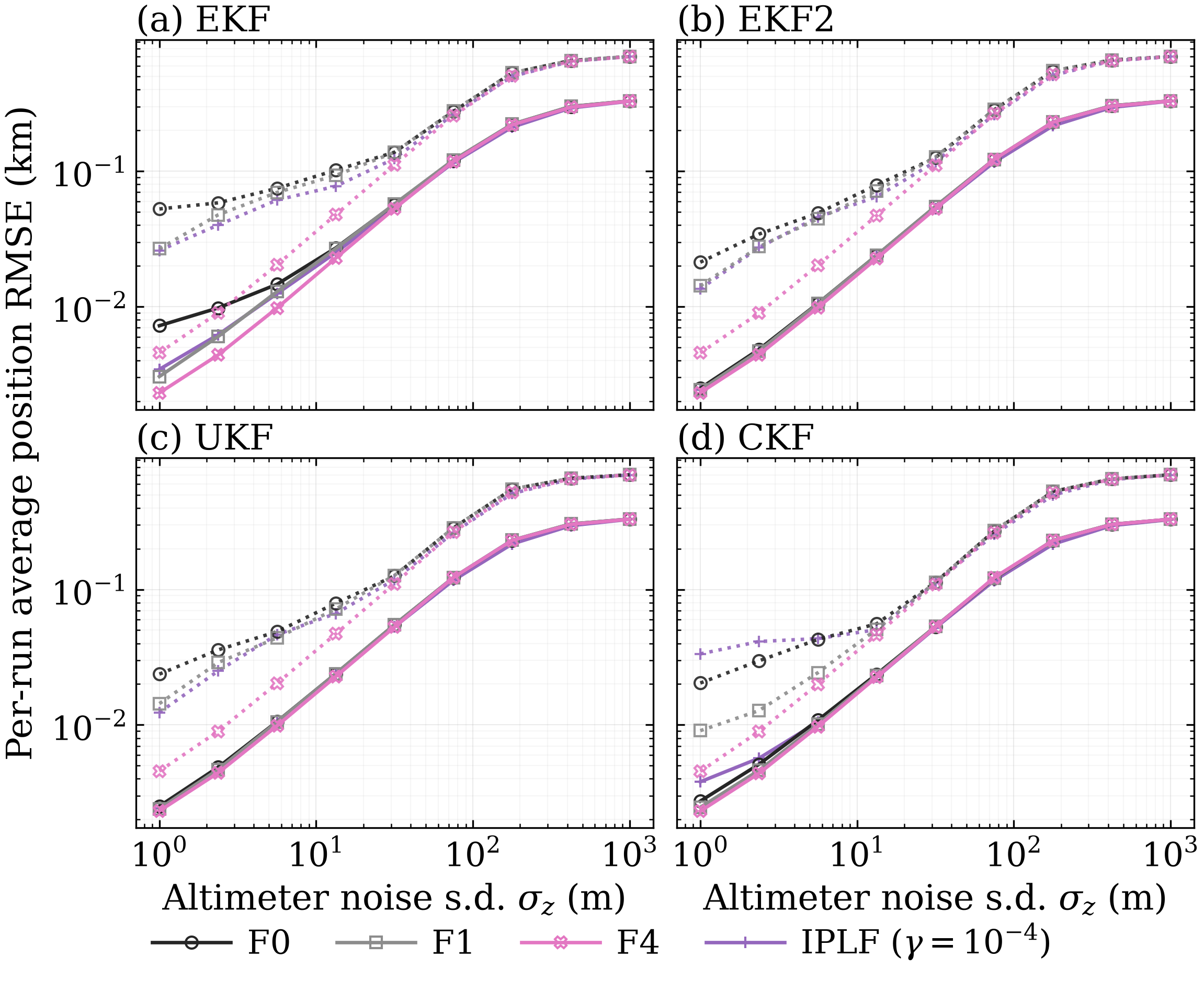}
\caption{Median (solid) and 95th percentile (dotted) of the per-run RMSE over
the final 25\% of each trajectory. Left: battery state estimation. Right:
terrain-referenced navigation.}
\label{fig:rmse_tails}
\end{figure*}

We evaluate state-estimation accuracy using SOC and SOH for the battery,
and both position coordinates for terrain-referenced navigation. For each
key state and each run, we calculate the root mean square error (RMSE)
over the final 25\% of the trajectory (the last 15 steps). This prevents
errors in the first few steps from dominating the accuracy measure.
We then average the two key-state RMSEs to obtain a single per-run RMSE.
For the measurement-noise sweep, Fig. \ref{fig:rmse_sweeps} reports the
root mean square (RMS) of these values across $10{,}000$ Monte Carlo runs,
and Fig. \ref{fig:rmse_tails} reports their median and 95th percentile.
For the setup sweep, we calculate the same three statistics within each
setup. Table \ref{tab:randomized} reports the geometric mean of each
statistic over the 300 setups. Since RMSE can vary greatly across setups,
the geometric mean summarizes performance on a relative scale.

\begin{table*}[htbp]
\centering
\caption{Setup sweep with 300 physical setups per application and $1{,}000$ paired runs per filter-setup cell. Each per-run RMSE averages the two key-state RMSEs. Within each setup, we calculate the RMS, median, and 95th percentile of these values, and ANEES from \eqref{eq:anees}. Each entry is the geometric mean of the corresponding statistic over 300 setups. Battery RMSE is dimensionless. Navigation RMSE is in kilometers. ANEES has nominal value one. The four nonlinear-KF variants are reported separately. Within each variant and application, all values no more than 5\% above the minimum for that metric are bolded.}
\label{tab:randomized}
\scriptsize
\setlength{\tabcolsep}{2.4pt}
\renewcommand{\arraystretch}{1.16}
\setlength{\aboverulesep}{0pt}
\setlength{\belowrulesep}{0pt}
\vspace{3pt}
\begin{tabular}{llcccc|cccc}
\toprule
\multicolumn{6}{c|}{Battery state estimation}
& \multicolumn{4}{c}{Terrain-referenced navigation}\\
KF & Method & RMS & Median & 95th percentile & ANEES
& RMS & Median & 95th percentile & ANEES\\
\midrule
\textbf{EKF} & F0 & $0.0224$ & $0.00123$ & $0.0191$ & $4.22\mathbin{\times}10^{8}$ & $0.246$ & $0.0279$ & $0.176$ & $84.2$\\
 & F1 & $0.00183$ & $0.00037$ & $0.00177$ & $8.32$ & $0.133$ & $\boldsymbol{0.0266}$ & $0.0919$ & $21.1$\\
 & F2 & $0.00167$ & $0.000374$ & $0.00185$ & $\boldsymbol{1.28}$ & $0.0972$ & $\boldsymbol{0.0269}$ & $0.0769$ & $8.48$\\
 & F3 & $0.0012$ & $0.000311$ & $0.0014$ & $9.01$ & $\boldsymbol{0.0849}$ & $\boldsymbol{0.0263}$ & $0.0744$ & $9.82$\\
 & F4 & $\boldsymbol{0.000734}$ & $\boldsymbol{0.000291}$ & $\boldsymbol{0.00103}$ & $1.99$ & $\boldsymbol{0.0828}$ & $\boldsymbol{0.0262}$ & $\boldsymbol{0.0696}$ & $\boldsymbol{7.04}$\\
 & IPLF & $0.00178$ & $0.000411$ & $0.00184$ & $42.6$ & $0.224$ & $\boldsymbol{0.0272}$ & $0.169$ & $86.2$\\
\midrule
\textbf{EKF2} & F0 & $0.00189$ & $\boldsymbol{0.000293}$ & $0.00144$ & $41.8$ & $0.103$ & $\boldsymbol{0.0261}$ & $0.0748$ & $15$\\
 & F1 & $0.000809$ & $\boldsymbol{0.000296}$ & $0.000939$ & $1.87$ & $0.0826$ & $\boldsymbol{0.0261}$ & $0.0682$ & $8.18$\\
 & F2 & $0.00109$ & $0.000371$ & $0.00112$ & $\boldsymbol{1.06}$ & $0.081$ & $\boldsymbol{0.0271}$ & $0.0648$ & $5.53$\\
 & F3 & $0.000696$ & $0.000314$ & $0.000959$ & $1.72$ & $\boldsymbol{0.0678}$ & $\boldsymbol{0.0261}$ & $0.0639$ & $5.84$\\
 & F4 & $\boldsymbol{0.000574}$ & $\boldsymbol{0.000297}$ & $\boldsymbol{0.000832}$ & $1.27$ & $\boldsymbol{0.0657}$ & $\boldsymbol{0.0261}$ & $\boldsymbol{0.0603}$ & $\boldsymbol{4.23}$\\
 & IPLF & $0.00141$ & $\boldsymbol{0.000304}$ & $0.00119$ & $14.9$ & $0.113$ & $\boldsymbol{0.0262}$ & $0.0857$ & $22.1$\\
\midrule
\textbf{UKF} & F0 & $0.00276$ & $\boldsymbol{0.0003}$ & $0.00177$ & $130$ & $0.108$ & $\boldsymbol{0.0261}$ & $0.0781$ & $16.4$\\
 & F1 & $0.000903$ & $\boldsymbol{0.000297}$ & $0.000972$ & $2.42$ & $0.0839$ & $\boldsymbol{0.0261}$ & $0.0688$ & $8.79$\\
 & F2 & $0.0011$ & $0.000373$ & $0.00113$ & $\boldsymbol{1.07}$ & $0.079$ & $\boldsymbol{0.0271}$ & $0.065$ & $5.36$\\
 & F3 & $0.000764$ & $0.000315$ & $0.000964$ & $1.97$ & $0.0705$ & $\boldsymbol{0.0262}$ & $0.0644$ & $6.19$\\
 & F4 & $\boldsymbol{0.000596}$ & $\boldsymbol{0.000296}$ & $\boldsymbol{0.000832}$ & $1.28$ & $\boldsymbol{0.0634}$ & $\boldsymbol{0.0261}$ & $\boldsymbol{0.0603}$ & $\boldsymbol{4.05}$\\
 & IPLF & $0.00149$ & $\boldsymbol{0.000309}$ & $0.00125$ & $17.5$ & $0.118$ & $\boldsymbol{0.0262}$ & $0.0861$ & $22.9$\\
\midrule
\textbf{CKF} & F0 & $0.00191$ & $\boldsymbol{0.000305}$ & $0.00145$ & $46.9$ & $0.141$ & $\boldsymbol{0.0267}$ & $0.0967$ & $27.4$\\
 & F1 & $\boldsymbol{0.000722}$ & $\boldsymbol{0.000297}$ & $\boldsymbol{0.000988}$ & $2.38$ & $0.0916$ & $\boldsymbol{0.0263}$ & $0.0705$ & $10.1$\\
 & F2 & $0.00121$ & $0.000482$ & $0.00139$ & $\boldsymbol{1.12}$ & $0.084$ & $\boldsymbol{0.0268}$ & $0.0665$ & $6.22$\\
 & F3 & $0.00073$ & $0.000318$ & $0.00112$ & $2.42$ & $\boldsymbol{0.0715}$ & $\boldsymbol{0.0262}$ & $0.0679$ & $6.8$\\
 & F4 & $\boldsymbol{0.000689}$ & $\boldsymbol{0.000311}$ & $\boldsymbol{0.000993}$ & $1.6$ & $\boldsymbol{0.0711}$ & $\boldsymbol{0.0262}$ & $\boldsymbol{0.0632}$ & $\boldsymbol{5.38}$\\
 & IPLF & $0.00153$ & $0.000318$ & $0.00129$ & $18.6$ & $0.133$ & $\boldsymbol{0.0265}$ & $0.101$ & $38$\\
\bottomrule
\end{tabular}
\end{table*}
 
Figure \ref{fig:rmse_sweeps} shows that F4's advantage is largest at low
measurement noise, where the measurement is more informative. Compared
with the three published baselines (F0, F1, and IPLF), F4 can reduce RMS
by several orders of magnitude. Figure \ref{fig:rmse_tails} helps explain
these reductions. For EKF2, UKF, and CKF, median per-run RMSE is generally
similar across frameworks, while the 95th percentile often improves
substantially with probing. The benefit is therefore concentrated in
large-error trajectories. This pattern is consistent with the probe's
motivation: when local nonlinearity is weak, the moment map changes little
across the uncertainty region, and a single-center update can suffice.
When nonlinearity is strong or prediction uncertainty is severely
understated, probing can improve the correction and its covariance report.

For EKF, the improvement also extends to typical accuracy. EKF is the only
examined variant that discards all second- and higher-order measurement
curvature. In the noiseless quadratic case of
Proposition \ref{prop:understated_scale}, its residual
$\Omega_{\rm map}=0$ allows $\delta=0$, so the comparison already favors
probing at $\upsilon=1$ when $\Omega_q\ne0$.

Table \ref{tab:randomized} shows the same broad pattern across 300 setups.
F4 has the lowest geometric-mean RMS of per-run RMSE for every KF variant
in both applications, while its median is within 5\% of the best method
in all eight combinations. Its 95th-percentile value is also the lowest
in seven combinations and within 1\% of the lowest in the remaining
battery CKF comparison.

\subsection{Covariance consistency}\label{sec:consistency}

Covariance consistency is another important metric for state estimators. In this work, covariance
consistency is evaluated using the average normalized estimation error squared (ANEES) 
\cite{BarShalom2001,ANEES}. For key-state error $e_k^{(j)}$ and reported covariance block
$P_k^{(j)}$ in Monte Carlo run $j$, 
\begin{equation}\label{eq:anees}
\begin{aligned}
 \overline{\operatorname{NEES}}^{(j)}
 &=\frac{1}{(N_t+1)n_K}\sum_{k=0}^{N_t}
 e_k^{(j)\top}\big(P_k^{(j)}\big)^{-1}e_k^{(j)},\\
 \operatorname{ANEES}
 &=\frac{1}{N_{\rm MC}}\sum_{j=1}^{N_{\rm MC}}
 \overline{\operatorname{NEES}}^{(j)},
\end{aligned}
\end{equation}
where $N_{\rm MC}$ counts runs, $N_t$ counts updates, and $n_K$ counts key
states: SOC and SOH for battery, and both position coordinates for navigation. ANEES near one indicates agreement between reported
covariance and observed error. 

\begin{figure*}[htbp]
\centering
\includegraphics[width=0.495\textwidth]{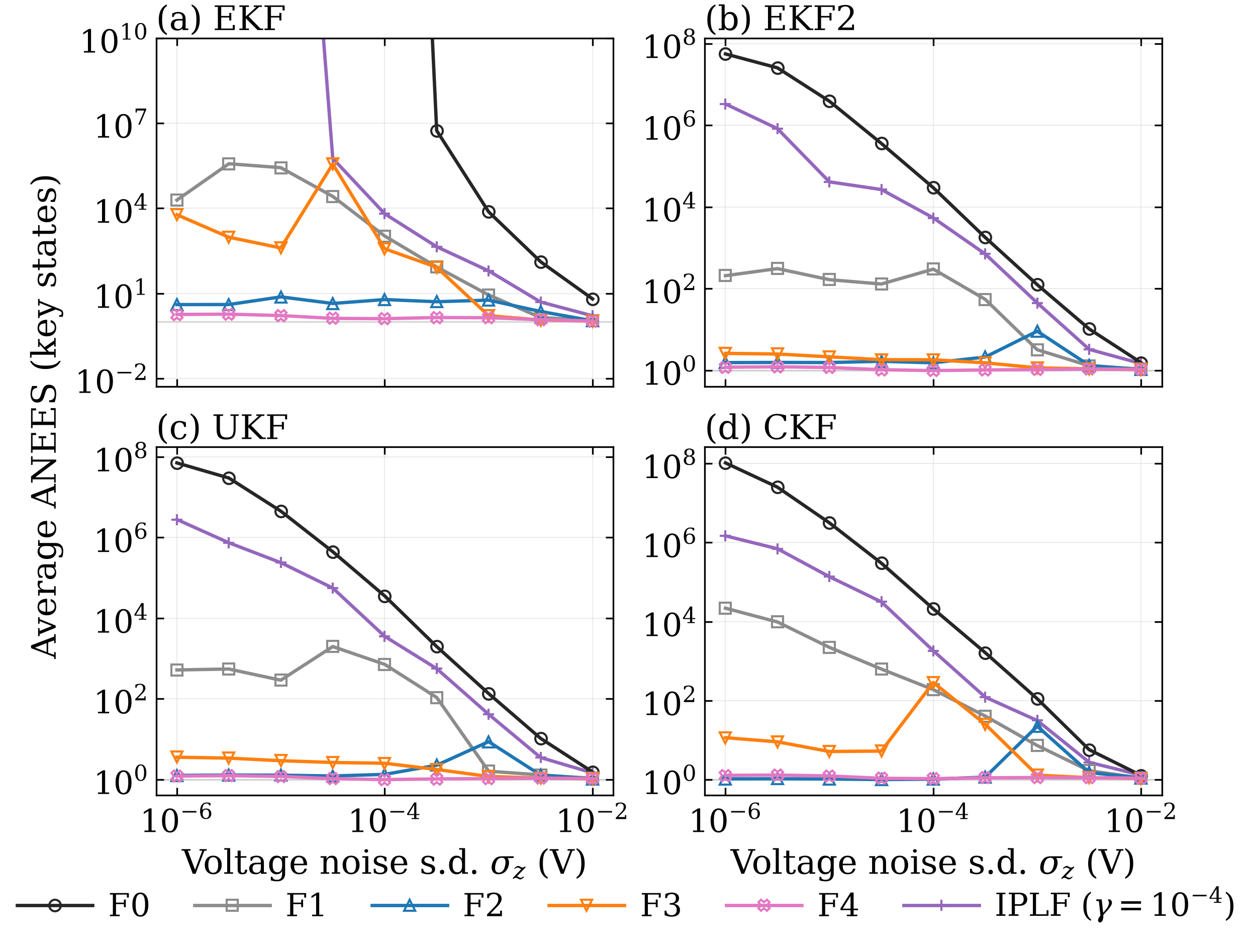}\hfill
\includegraphics[width=0.495\textwidth]{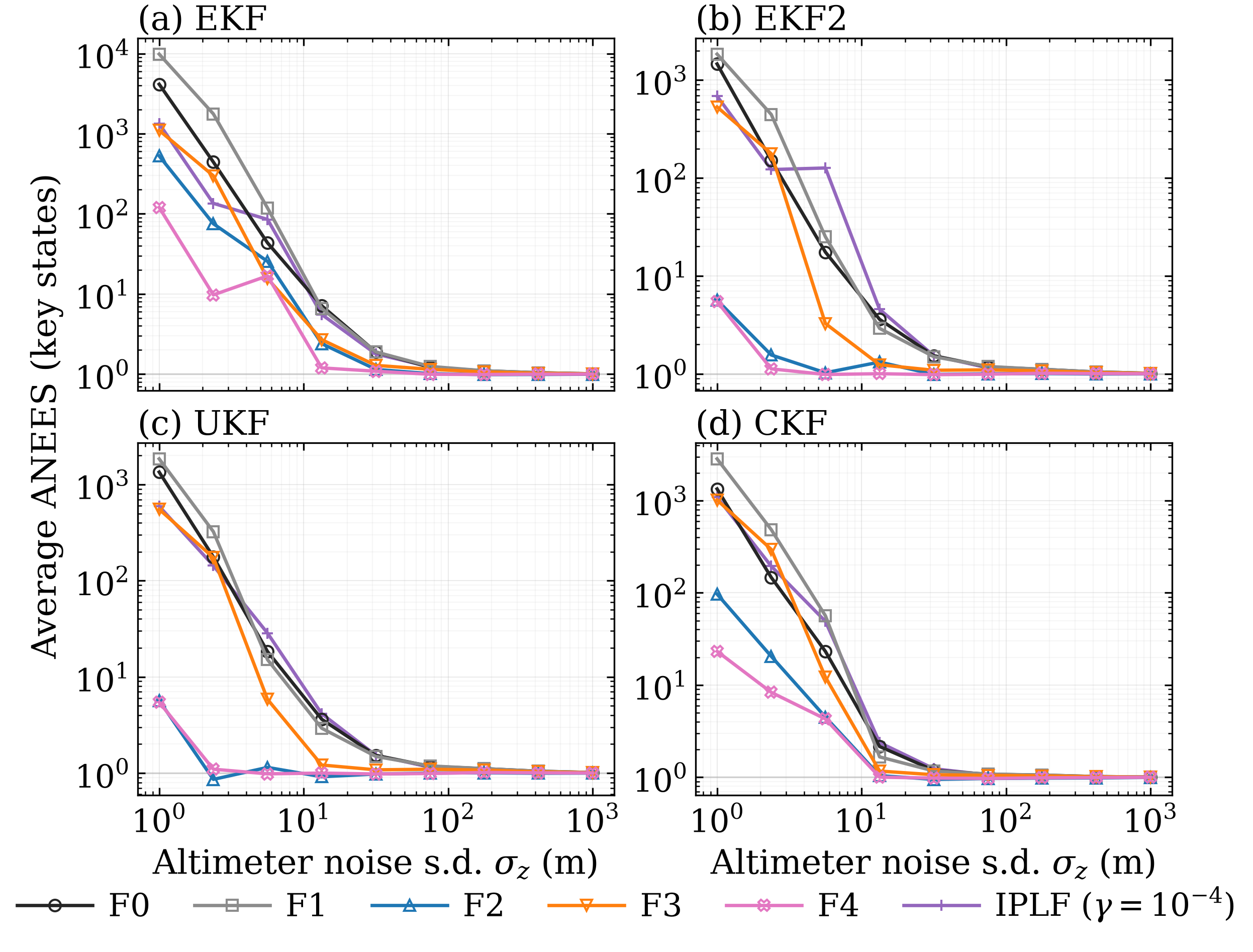}
\caption{Average key-state ANEES in the fixed measurement-noise sweeps.
Left: battery state estimation. Right: terrain-referenced navigation.
The battery EKF panel is limited to $10^{10}$ for readability. F0 and
IPLF exceed this limit at low measurement noise.}
\label{fig:anees_sweeps}
\end{figure*}

Figure \ref{fig:anees_sweeps} and Table \ref{tab:randomized} show that
probing substantially improves covariance consistency. In the battery
measurement-noise sweep, F4's ANEES ranges from $0.98$ to $1.85$. It is
the only principal method that stays within $[0.5,2]$ for every KF variant
and noise level. The published baselines (F0, F1, and IPLF) can instead
be overconfident by several orders of magnitude at low measurement noise.

For terrain-referenced navigation, a single exact elevation measurement
can correspond to several positions, making multimodal uncertainty
particularly relevant at low measurement noise. In
Fig. \ref{fig:anees_sweeps}, at $\sigma_z=1$ m,
F4 reduces ANEES to 120 for the EKF, about 5.5 for the EKF2 and UKF, and 23
for the CKF. F0 lies between $1.3\times10^3$ and $4.1\times10^3$, while F1
and IPLF are also strongly overconfident. Once $\sigma_z\ge13.3$ m, F4 lies
between $0.97$ and $1.20$ for all four KF variants. The residual low-noise
gap is consistent with KF's limitation of representing a broad, potentially
multimodal distribution by one mean and covariance. F4's ANEES also
remains above one for terrain-referenced navigation in
Table \ref{tab:randomized}, although the improvement over F0, F1, and
IPLF is large.

\subsection{Ablation study}
Comparing F4 against F2 and F3 in Fig. \ref{fig:rmse_sweeps} and
Table \ref{tab:randomized} separates the roles of the two probes. F2 and F4 use the same
probe-averaged covariance-report rule, so their difference isolates the update
probe. In Table \ref{tab:randomized}, F2 has the lowest ANEES in battery
estimation and the second-lowest in terrain navigation for every KF variant.
However, its RMS and 95th-percentile RMSE are higher than F4's in all eight
comparisons. Thus, a relatively consistent covariance report does not by
itself ensure an accurate correction. Recalibration and back-out do not
revise the selected Kalman gain. Repeated back-out can leave the filter
relying mainly on prediction, whose error can remain large.

F3 and F4 use the same probe-mean gain rule, so their difference isolates the
recalibration probe. F3 generally has low RMS, but its ANEES is higher than
F4's in all eight comparisons in Table \ref{tab:randomized}. This result
suggests that the recalibration probe primarily improves covariance
consistency. Averaging the post-update moment maps retains the positive
semidefinite slope-disagreement term in \eqref{eq:average_slope_form},
providing an additional contribution to the covariance report.
Because the revised covariance
feeds later predictions and gains, its benefit can extend to state accuracy.

\subsection{Comparison to higher-order single-center filters}\label{sec:order_controls}

Probing uses additional computation to evaluate the moment map at several
centers. Higher-order methods instead refine the approximation at one
center. To compare these two uses of computation, we pair EKF2 with
EKF+probe and the third-order EKF (EKF3) with EKF2+probe. Gauss--Hermite
(GH) quadrature and fifth-degree cubature (CKF-5) are paired with
UKF+probe and CKF+probe, respectively. Here ``+probe'' denotes the
proposed framework (F4).
Each higher-order baseline retains the predictor of its paired lower-order filter, changes
only the measurement moment rule, and uses single-point recalibration with
the unit-invariant back-out rule (i.e., $M=P^{-1}$). The comparison therefore contrasts higher local
order at one center with multi-center evaluation of an existing moment rule.
For example, the EKF update probe uses $n+1$ local linearizations. When
analytic Hessians are unavailable, differences between such Jacobians
can instead be used to approximate the Hessians needed by EKF2. GH has
different scaling, since its node count grows exponentially with the
state dimension.
Table \ref{tab:order_controls} reports the same four statistical summaries
as Table \ref{tab:randomized}.

\begin{table*}[htbp]
\centering
\caption{Comparison with higher-order single-center baselines over 300 setups per application. Each per-run RMSE averages the two key-state RMSEs. Within each setup, we calculate the RMS, median, and 95th percentile of these values, and ANEES from \eqref{eq:anees}. Each entry is the geometric mean of the corresponding statistic over 300 setups. Battery RMSE is dimensionless. Navigation RMSE is in kilometers. ANEES has a nominal value of one. Within each two-row comparison, a value is bolded when it is more than 5\% below its partner's value.}
\label{tab:order_controls}
\scriptsize
\setlength{\tabcolsep}{2.2pt}
\renewcommand{\arraystretch}{1.16}
\setlength{\aboverulesep}{0pt}
\setlength{\belowrulesep}{0pt}
\vspace{3pt}
\begin{tabular}{lcccc|cccc}
\toprule
\multicolumn{5}{c|}{Battery state estimation}
& \multicolumn{4}{c}{Terrain-referenced navigation}\\
Method & RMS & Median & 95th percentile & ANEES
& RMS & Median & 95th percentile & ANEES\\
\midrule
EKF2 & $0.000841$ & $0.000293$ & $\boldsymbol{0.000938}$ & $2$ & $0.0799$ & $0.0261$ & $0.0673$ & $7.66$\\
EKF+probe & $\boldsymbol{0.000734}$ & $0.000291$ & $0.00103$ & $1.99$ & $0.0828$ & $0.0262$ & $0.0696$ & $\boldsymbol{7.04}$\\
\midrule
EKF3 & $0.000736$ & $0.000295$ & $0.000977$ & $1.77$ & $0.073$ & $0.0262$ & $0.0602$ & $5.56$\\
EKF2+probe & $\boldsymbol{0.000574}$ & $0.000297$ & $\boldsymbol{0.000832}$ & $\boldsymbol{1.27}$ & $\boldsymbol{0.0657}$ & $0.0261$ & $0.0603$ & $\boldsymbol{4.23}$\\
\midrule
GH & $0.000778$ & $0.00029$ & $0.000895$ & $2.74$ & $0.081$ & $0.026$ & $0.0646$ & $7.14$\\
UKF+probe & $\boldsymbol{0.000596}$ & $0.000296$ & $\boldsymbol{0.000832}$ & $\boldsymbol{1.28}$ & $\boldsymbol{0.0634}$ & $0.0261$ & $\boldsymbol{0.0603}$ & $\boldsymbol{4.05}$\\
\midrule
CKF-5 & $0.000743$ & $\boldsymbol{0.00029}$ & $\boldsymbol{0.000888}$ & $2.29$ & $0.0808$ & $0.0261$ & $0.0636$ & $6.99$\\
CKF+probe & $\boldsymbol{0.000689}$ & $0.000311$ & $0.000993$ & $\boldsymbol{1.6}$ & $\boldsymbol{0.0711}$ & $0.0262$ & $0.0632$ & $\boldsymbol{5.38}$\\
\bottomrule
\end{tabular}
\end{table*}

F4 yields a lower RMS of per-run RMSE in seven of the eight comparisons,
with reductions of approximately $7$--$23\%$. The exception is EKF+probe in
terrain navigation, whose RMS is $3.6\%$ higher than that of EKF2. 

Median RMSE differs by less than $2.1\%$ in seven of the eight comparisons.
The exception is CKF+probe in battery estimation, whose median is
$7.3\%$ higher than that of CKF-5. The 95th-percentile results are mixed:
F4 improves this statistic by more than $5\%$ in three comparisons, while
the higher-order baseline improves it by more than $5\%$ in two.
Thus, probing improves RMS more consistently than the 95th percentile.
RMS gives greater weight to the largest per-run errors, which can differ
substantially even when the medians and 95th percentiles are similar.

The consistency comparison further highlights the probe's strength. F4 has a geometric-mean ANEES closer
to its nominal value of one in all eight comparisons, with reductions above
$5\%$ in seven. The largest reductions occur for UKF+probe relative to GH:
from $2.74$ to $1.28$ in battery estimation and from $7.14$ to $4.05$ in
terrain navigation. The comparison of EKF3 with
EKF2+probe also shows that probing can improve RMS and ANEES when the
underlying moment map already includes second-order terms.

\subsection{Computational complexity and runtime}\label{sec:runtime}

We compare moment-map evaluation counts and measured runtime because the
cost of each evaluation depends on the underlying filter. A center-point
update requires one measurement
moment evaluation. F4 performs
$n+1$ evaluations for the update probe and $n+1$ for the recalibration
probe, together with one center evaluation for the predicted measurement. It
therefore uses $2n+3$ moment-map calls per measurement update. F2 and F3 use
$n+2$ and $n+3$ calls, respectively, while F1 uses two center calls.

\begin{figure}[!t]
\centering
\includegraphics[width=\columnwidth]{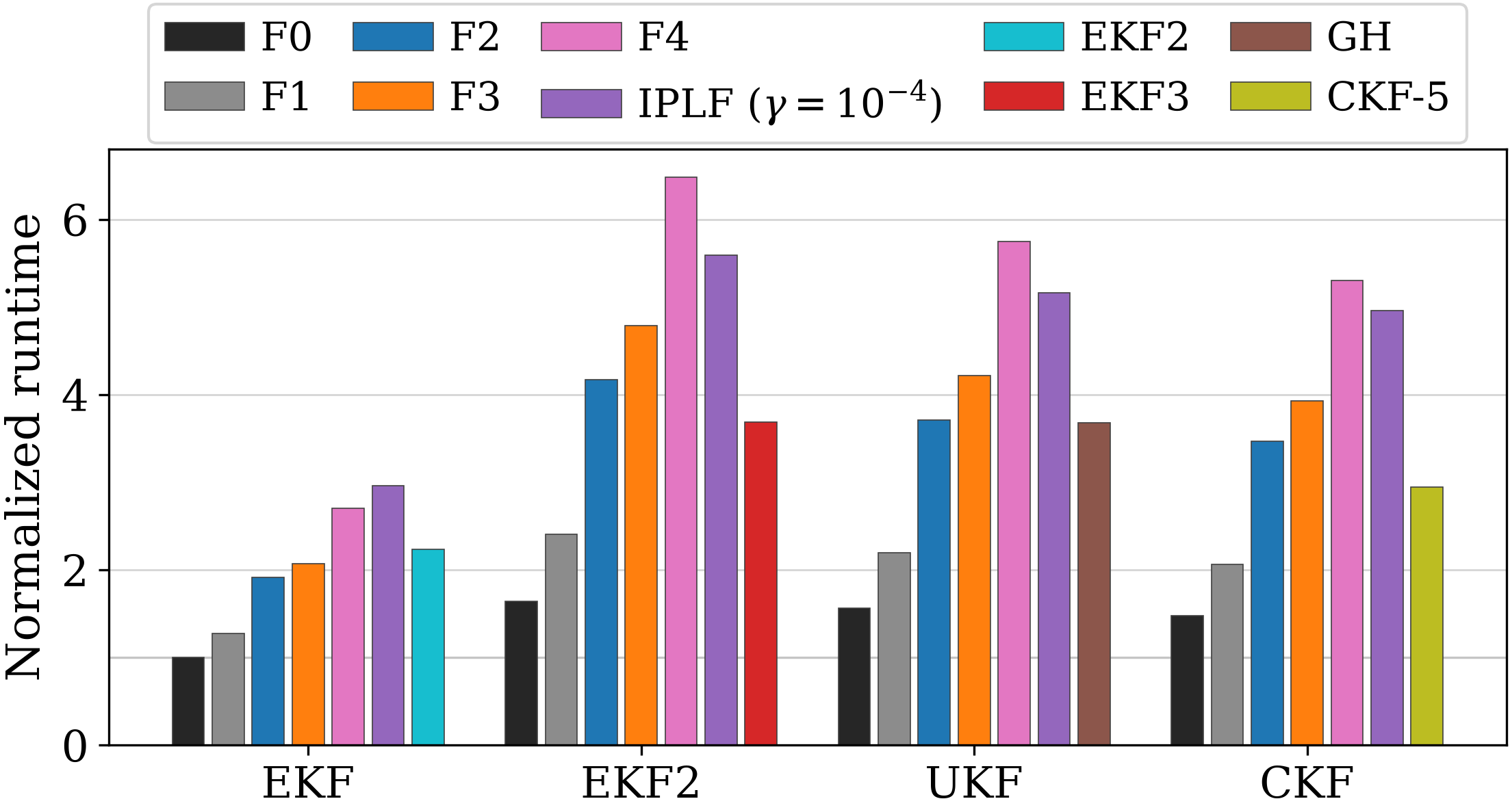}
\caption{Runtime normalized by the conventional EKF within each application
and then averaged over the two applications. Timings use $\sigma_z=10^{-4}$ V
for battery estimation and $\sigma_z=1$ m for terrain navigation. Each cell
uses 200 serial trajectories. Each filter group includes its corresponding
higher-order single-center baseline: EKF2, EKF3, GH, or CKF-5.}
\label{fig:runtime}
\end{figure}

Figure \ref{fig:runtime} shows that F4 costs approximately $2.7$, $6.5$,
$5.8$, and $5.3$ times the common conventional-EKF baseline when wrapping
EKF, EKF2, UKF, and CKF, respectively. Relative to the conventional
implementation of the same filter, its runtime is approximately
$2.7$--$4.0$ times as large. F4 takes approximately $0.9$--$1.2$ times
the runtime of IPLF and requires no online iteration. The higher-order
baselines cost approximately $2.2$--$3.7$ times the common baseline.
Table \ref{tab:order_controls} reports the corresponding trade-off
between accuracy and covariance consistency.

\section{Conclusion}\label{sec:conclusion}

Conventional nonlinear KFs approximate measurement moments at a single
state, which can lead to inaccurate corrections and overconfident covariance
estimates when the local model varies substantially across the uncertainty
region. Covariance-scaled probe sets account for this variation in gain
selection and covariance recalibration. The analysis shows why probing can
be advantageous when prediction uncertainty is understated. Across 300
randomized setups per application in battery estimation and terrain-referenced
navigation, the framework achieves the lowest RMS of per-run RMSE among
the six principal methods while retaining near-best median and
95th-percentile values. Compared with the paired higher-order single-center
baselines, it yields ANEES closer to one in all eight comparisons and lower
RMS of per-run RMSE in seven. Its runtime is approximately $2.7$--$4.0$ times that of the
corresponding conventional filter. Future work will explore the use of
probe sets in other state estimation methods, including moving horizon
estimation.

\bibliographystyle{unsrt}
\bibliography{autosam}

\appendix

\section{Implemented nonlinear Kalman filters}\label{app:filters}

Let $g:\mathbb R^n\rightarrow\mathbb R^q$ be evaluated about $\xi$ with
spread $P\succeq0$. For this general transform, $q$ denotes the output
dimension and $G=\nabla g(\xi)$ its Jacobian. When $g=h_k$, these become
$q=p$ and the measurement Jacobian used in the main text.
Each nonlinear-KF variant returns
\begin{equation}\label{eq:transform}
 \mathcal T_g(\xi,P)=(\hat g,C_{xg},P_g),
\end{equation}
where $\hat g$ approximates $\mathbb E[g(x)]$, $C_{xg}$ approximates
$\operatorname{Cov}(x,g(x))$, and $P_g$ approximates
$\operatorname{Var}(g(x))$. For prediction, $g(x)=f_k(x,u_{k-1})$, with the known input held fixed, and
\begin{equation}\label{eq:prediction_transform}
 \hat x_{k|k-1}=\hat g,\qquad P_{k|k-1}=P_g+Q_{k-1}.
\end{equation}
For measurement evaluation, $g=h_k$ and
\begin{equation}\label{eq:measurement_transform}
 \hat z(\xi)=\hat g,\qquad C_P(\xi)=C_{xg},\qquad S_P(\xi)=P_g+R_k.
\end{equation}

For the EKF, with $G:=\nabla g(\xi)$,
\begin{equation}\label{eq:ekf_transform}
 \hat g=g(\xi),\qquad C_{xg}=PG^\top,
 \qquad P_g=GPG^\top.
\end{equation}

For the EKF2, let $G_a^{(2)}:=\nabla^2g_a(\xi)$, where $a,b=1,\ldots,q$
index output components.
With $G_{a:}$ denoting row $a$ of $G$, the implemented moments are
\begin{align}
 [\hat g]_a&=g_a(\xi)+\tfrac12\operatorname{tr}(G_a^{(2)}P),
 \label{eq:ekf2_mean}\\
 C_{xg}&=PG^\top,\label{eq:ekf2_cross}\\
 [P_g]_{ab}&=G_{a:}PG_{b:}^\top
 +\tfrac12\operatorname{tr}(G_a^{(2)}PG_b^{(2)}P).
 \label{eq:ekf2_cov}
\end{align}

The EKF3 moment rule computes the Gaussian moments of the third-order Taylor
polynomial. Let \(T_a:=\nabla^3g_a(\xi)\) and define
\(\tau_a\in\mathbb R^n\) by
\begin{equation}\label{eq:ekf3_q}
 [\tau_a]_i:=\sum_{j,k=1}^{n}(T_a)_{ijk}P_{jk}.
\end{equation}
Let \(\mathcal T_P\in\mathbb R^{q\times n}\) have \(\tau_a^\top\) as its \(a\)th row.
The mean remains \eqref{eq:ekf2_mean}, while
\begin{equation}\label{eq:ekf3_cross}
 C_{xg}=P\left(G+\tfrac12\mathcal T_P\right)^\top.
\end{equation}
The transformed covariance is
\begin{align}
 [P_g]_{ab}={}&G_{a:}PG_{b:}^\top
 +\tfrac12\operatorname{tr}(G_a^{(2)}PG_b^{(2)}P)\notag\\*
 &+\tfrac12\left(G_{a:}P\tau_b+G_{b:}P\tau_a\right)
 +\tfrac14\tau_a^\top P\tau_b\notag\\*
 &+\tfrac16\sum_{\substack{i,j,k\\l,m,r}}
 (T_a)_{ijk}P_{il}P_{jm}P_{kr}(T_b)_{lmr}.
 \label{eq:ekf3_cov}
\end{align}
These expressions include every Gaussian moment generated by the cubic Taylor
polynomial and are exact when \(g\) is cubic.
The summation indices $i,j,k,l,m,r$ in \eqref{eq:ekf3_cov} range from $1$ to $n$.

The UKF uses the scaled unscented transform of \cite{UKF2} with
\begin{equation}\label{eq:ukf_parameters}
 \alpha=10^{-3},\qquad \beta=2,\qquad \kappa=0.
\end{equation}
Its $2n+1$ sigma points and associated mean and covariance weights are applied to the joint
state-output transform in \eqref{eq:transform}. The CKF uses the standard $2n$
equally weighted spherical-radial rule of \cite{CKF}. Both filters compute
$\hat g$, $C_{xg}$, and $P_g$ as the weighted sample mean, cross-covariance,
and output covariance of their transformed points.

The EKF2 and EKF3 baselines retain EKF and EKF2 prediction, respectively.
Each evaluates its measurement rule before and after the correction.

For the quadrature rules below, each dimensionless point $u\in\mathbb R^n$
is converted to a state point $\xi+Lu$, where $LL^\top=P$.
Multiplication by $L$ sets the spread, and adding $\xi$ places the points
around the evaluation center.
The GH quadrature rule uses all $m^n$ Cartesian products of the probabilists' Hermite
nodes and normalized weights, with $m=3$ for battery and $m=4$ for terrain. It
is exact for Gaussian polynomial moments through degree $2m-1$ in each
coordinate \cite{QKF}. The CKF-5 cubature rule uses the $2n^2+1$ fifth-degree rule of
\cite{CKF5}: the origin has weight $2/(n+2)$, axis points
$\pm\sqrt{n+2}\,e_i$ have weight $(4-n)/[2(n+2)^2]$, and diagonal points
$\sqrt{(n+2)/2}(\pm e_i\pm e_j)$, for $1\le i<j\le n$ and all sign
choices, have weight $1/(n+2)^2$. Here $e_i$ is the $i$th coordinate unit vector.
Both baselines retain
the lower-order predictor, replace only the measurement moments, and call the
measurement rule twice per update.

\section{Detailed simulation setup}\label{app:systems}
This appendix specifies the fixed measurement-noise sweeps and the
randomized setups.
Both systems are adapted from our previous work \cite{parent}.

In each Monte Carlo run, the initial estimate is generated as
$x_0+\epsilon_0$, where $x_0$ is the true initial state and
$\epsilon_0\sim\mathcal N(0,P_0)$. The battery initial estimates are
then clipped as described below, with the reported $P_0$ retained.
Process and measurement noises are Gaussian, mutually independent,
and independent across time and of the initial draw. The filters use
the same $P_0$, $Q$, and measurement-noise covariance $R$ as the simulation.
To preserve positive semidefiniteness in numerical computations, we
propagate covariance factors $L$ satisfying $P=LL^\top$.

\subsection{Battery state estimation}

The battery model has state $x=[\mathrm{SOC},U_c,\mathrm{SOH}]^\top$ and a
terminal-voltage measurement. The state $U_c$ is the polarization-branch
voltage, measured in volts. SOC is remaining capacity relative to present
maximum capacity. SOH is present maximum capacity relative to initial capacity.
Both SOC and SOH are dimensionless.

The fixed setup uses the current profile, in amperes,
\begin{equation}\label{eq:battery_current}
I_k =
\begin{cases}
0, & 0\le k\le15,\\
-1.3\dfrac{k-15}{30}, & 15<k\le45,\\
-1.3, & 45<k\le60.
\end{cases}
\end{equation}
The cell begins at 35\% SOC and ends near 10\% SOC under the nominal model.

The parameters are $Q_0=1$ Ah, $R_1=0.01\,\Omega$, $R_2=0.05\,\Omega$,
$1/(R_2C_1)=0.008\,\mathrm{s}^{-1}$, and $\Delta t=20$ s. Here $Q_0$ is the nominal cell capacity, $R_1$ and
$R_2$ are resistances, and $C_1$ is the polarization capacitance.
The open-circuit voltage (OCV) is
\begin{equation}\label{eq:ocv}
\begin{aligned}
\mathrm{OCV}=\sum_{i=0}^9\Big(&\tfrac{\mathrm{SOH}-0.8}{0.2}\,a_{100,10-i}\\
&+\tfrac{1-\mathrm{SOH}}{0.2}\,a_{80,10-i}\Big)\mathrm{SOC}^i,
\end{aligned}
\end{equation}
where
\begin{equation*}
\begin{aligned}
a_{100}={}&[1390.38,-6961.31,14760.31,-17230.92,\\
&12055.71,-5162.75,1330.60,-196.37,\\
&15.60,2.96],\\
a_{80}={}&[813.94,-4229.96,9345.49,-11415.38,\\
&8396.15,-3801.07,1043.09,-165.29,\\
&14.28,2.96].
\end{aligned}
\end{equation*}
The state and measurement models are
\begin{equation}\label{eq:battery_state}
\begin{bmatrix}\mathrm{SOC}_{k}\\U_{c,k}\\\mathrm{SOH}_{k}\end{bmatrix}
=
\begin{bmatrix}
\mathrm{SOC}_{k-1}+\dfrac{I_{k-1}\Delta t}{3600Q_0\mathrm{SOH}_{k-1}}\\
\eta U_{c,k-1}+(1-\eta)R_2I_{k-1}\\
\mathrm{SOH}_{k-1}
\end{bmatrix}+w_{k-1},
\end{equation}
\begin{equation}\label{eq:battery_measurement}
 z_k=\mathrm{OCV}(\mathrm{SOC}_k,\mathrm{SOH}_k)+U_{c,k}+R_1I_k+v_k,
\end{equation}
where $\eta=e^{-\Delta t/(R_2C_1)}$. The independent noises are
\begin{equation}\label{eq:battery_noise}
 w_{k-1}\sim\mathcal N(0,Q),\qquad
 v_k\sim\mathcal N(0,\sigma_z^2),
\end{equation}
The prior scale $\rho_P$ and process scale $\rho_Q$ multiply the
base standard deviations of the initial estimation error and process
noise, respectively. The resulting covariance matrices are
\begin{align}
 P_0&=\rho_P^2\operatorname{diag}(0.05^2,0.005^2,0.03^2),
 \label{eq:battery_P0}\\
 Q&=(2\times10^{-6}\rho_Q)^2\operatorname{diag}(1,1,0).
 \label{eq:battery_Q}
\end{align}
Here the entries follow the state order and units defined above.
SOH is constant and carries no process noise.
The initial SOC and SOH estimates are clipped to $[0.02,0.98]$
after the Gaussian draw in both studies.

\emph{Fixed setup.}
The true initial state is $(0.35,0,0.86)$, with
$\rho_P=1.7$ and $\rho_Q=0.3$. Only the measurement-noise
standard deviation $\sigma_z$ varies.

\emph{Randomized setups.}
For each setup, $\sigma_z$, $\rho_P$, and $\rho_Q$ are sampled
independently and log-uniformly from $[10^{-6},10^{-2}]$ V,
$[0.25,4]$, and $[0.25,4]$, respectively. Log-uniform means
uniform on a logarithmic scale.
The true initial SOH, $\mathrm{SOH}_0$, and current amplitude,
$I_{\max}$, are sampled independently from uniform distributions
on $[0.84,0.96]$ and $[1,3]$ A. Replacing $1.3$ by $I_{\max}$
in \eqref{eq:battery_current} gives the known current profile.
The rest and ramp durations are unchanged.

Initial SOC is sampled after $\mathrm{SOH}_0$ and $I_{\max}$.
Let $d_{\mathrm{SOC}}=I_{\max}(\tfrac16\,\mathrm{h})/(Q_0\mathrm{SOH}_0)$
be an upper bound on the nominal SOC decrease over the trajectory.
The initial SOC is uniform on
$[\max\{0.25,d_{\mathrm{SOC}}+0.05\},0.75]$, which keeps the
nominal trajectory above 5\% SOC. The true initial polarization
voltage is $U_{c,0}=0$. All other model parameters retain their fixed-setup values.
The sampled parameters and true initial state are held fixed across
the Monte Carlo runs within each setup.

\subsection{Terrain-referenced navigation}

The state is the two-dimensional aircraft position in kilometers, and the
measurement is terrain elevation in metres. The fixed sweep uses
\begin{equation}\label{eq:terrain}
 h(x)=\sum_{j=0}^{3}\mathcal A_j\sin(k_j^\top x+\phi_j),\qquad
 \mathcal A_j=200\,2^{-0.97j}\ \mathrm{m}.
\end{equation}
Here $\mathcal A_j$ is the amplitude, $k_j$ is the wave vector, and $\phi_j$
is the phase of terrain component $j$. The wavelengths are $(20,10,5,2.5)$ km,
the directions are
$(238.4,42.1,288.0,112.5)^\circ$, and the phases are
$(289.4,313.3,92.0,197.3)^\circ$. If $\lambda_j$ and $\psi_j$ denote the
$j$th wavelength and direction, then
\begin{equation*}
 k_j=\frac{2\pi}{\lambda_j}
 \begin{bmatrix}\cos\psi_j&\sin\psi_j\end{bmatrix}^\top.
\end{equation*}

The state and measurement models are
\begin{equation}\label{eq:terrain_state}
 \begin{bmatrix}x_{1,k}\\x_{2,k}\end{bmatrix}
 =\begin{bmatrix}x_{1,k-1}+0.25\\x_{2,k-1}+0.15\end{bmatrix}+w_{k-1},
\end{equation}
\begin{equation}\label{eq:terrain_measurement}
 z_k=h(x_k)+v_k.
\end{equation}
The simulation has 60 steps, with true initial state $(5,5)$ km.
The prior scale $\rho_P$ multiplies the base initial standard deviation
of $0.30$ km in each position coordinate. The process scale $\rho_Q$
multiplies the base process-noise standard deviation of $0.002$ km
per step in each coordinate. Thus,
\begin{align}
 P_0&=(0.30\rho_P)^2I_2\ \mathrm{km}^2,\label{eq:terrain_P0}\\
 Q&=(0.002\rho_Q)^2I_2\ \mathrm{km}^2,\qquad
 R=\sigma_z^2.\label{eq:terrain_QR}
\end{align}
The measurement-noise standard deviation $\sigma_z$ is in metres.

\emph{Fixed setup.}
The scales are $\rho_P=1.5$ and $\rho_Q=0.25$, giving
$P_0=0.2025I_2\ \mathrm{km}^2$ and
$Q=2.5\times10^{-7}I_2\ \mathrm{km}^2$.
Only $\sigma_z$ varies in the measurement-noise sweep.

\emph{Randomized setups.}
For each setup, $\sigma_z$, $\rho_P$, and $\rho_Q$ are sampled
independently and log-uniformly from $[0.1,10^3]$ m, $[0.5,2]$,
and $[0.25,4]$, respectively.
The terrain amplitudes are $\mathcal A_j=\mathcal A_0 2^{-Hj}$,
where the roughness (Hurst) exponent $H$ and largest-component
amplitude $\mathcal A_0$ are sampled independently and uniformly
from $[0.75,0.99]$ and $[120,320]$ m.
The four directions and four phases are independently uniform
on $[0,2\pi)$. These map parameters and the noise and prior scales
are held fixed across the Monte Carlo runs within each setup.
The wavelengths, true initial position, and nominal motion retain
their fixed-setup values.

\end{document}